\newif\ifblog
\newif\iftex
\def\em{\it}
\def\emph#1{\textit{#1}}
\newtheorem{theorem}{Theorem}[section]
\newtheorem{lemma}[theorem]{Lemma}
\newtheorem{proposition}[theorem]{Proposition}
\newtheorem{remark}[theorem]{Remark}
\newenvironment{proof}{\noindent {\sc Proof:}}{$\Box$} 
\newcommand{\ep}{\varepsilon}
\newcommand{\s}{\sigma}
\newcommand{\p}{\partial}
\newcommand{\de}{\delta}
\newcommand{\be}{\begin{eqnarray}}
\newcommand{\ee}{\end{eqnarray}}
\newcommand{\beq}{\begin{equation}}
\newcommand{\eeq}{\end{equation}}
\newcommand{\bee}{\begin{eqnarray*}}
\newcommand{\eee}{\end{eqnarray*}}
\title{American option
  of stochastic volatility model with negative Fichera function on degenerate boundary}
\author{Xiaoshan Chen \thanks{Department of Mathematics,
    City University of Hong Kong, {\tt
      xchen53@gmail.com}. } \and
  Qingshuo Song \thanks{Department of Mathematics, City University of
    Hong Kong, {\tt songe.qingshuo@cityu.edu.hk}.}}
\begin{document}
\date{}
\maketitle

\begin{abstract}
  In this paper we study a general framework of American put option with stochastic volatility whose value function is associated with a 2-dimensional parabolic variational inequality  with degenerate boundaries. We  apply PDE methods to analyze the existences of the strong solution and the properties of the 2-dimensional manifold for the
free boundary.  Thanks to the regularity result
on the solution of the underlying PDE,
we can also provide the uniqueness of the solution by the argument
of the verification theorem together with the generalized Ito's formula
even though the solution may not be second order differentiable in the
space variable across the free boundary.

\vskip 0.2 true in \noindent {\bf Keyword:} Stochastic volatility,
American option, verification theorem,
optimal stopping.

\vskip 0.2 true in \noindent {\bf AMS Subject Classification
Number.} 91Gxx, 60G40, 60J67.

\end{abstract}

\newpage

\setlength{\baselineskip}{0.22in}
\section{Introduction}
Option pricing is one of the most important topics in the quantitative finance research. Although the Black-Scholes model has been well studied, empirical evidence suggests that the Black-Scholes model is inadequate to describe asset returns and the behavior of the option markets. One possible remedy is to assume that the volatility of the asset price also follows a stochastic process, see \cite{He93} and \cite{HW87}.

In the standard Black-Scholes model, a standard logarithmic change of variables transforms the Black-Scholes equation into an equation with constant coefficients which can be
studied by a general PDE theory directly.
Different from the standard Black-Scholes PDE,  the
general PDE methods does not directly apply to the
PDE  associated with the option pricing underlying
the stochastic volatility model in the following cases:
1)  The pricing equation is degenerate at the boundary;
2) The drift and volatility coefficients may grows faster than linear, see  \cite{He93}.

In the related literatures, \cite{He93} derived a closed-form solution
for the price of a European call option on some specific stochastic volatility
models. \cite{ET10} also
studied the Black-Scholes equation of European option underlying stochastic volatility models, and showed that the value function was the unique classical solution to a PDE with a certain boundary behavior. Also, \cite{BKX12} showed a necessary and sufficient condition on the uniqueness of classical solution to the valuation PDE of European option in a general framework of stochastic volatility models. 
In contrast to the European option pricing on the stochastic volatility model,
although there have been quite a few approximate solutions
 and numerical approaches, such as \cite{AGG10, CKMZ09},
the study of the existence and uniqueness of strong solution for PDE related to American option price 
on the stochastic volatility model is rather limited. In particular,
the unique solvability of PDE associated with American options of finite maturity with the presence of degenerate boundary and
super-linear growth
has not been studied in an appropriate Sobolev space.

 Note that, American call options with no dividend is equivalent to
 the European call option. For this reason,
 we only consider a general framework of American put option model with stochastic volatility whose  value function is associated by a 2-dimensional parabolic variational inequality. On the other hand, 
in the theory of linear PDE,
boundary conditions along degenerate boundaries should not needed
if the Fichera function is nonnegative, otherwise it should be imposed, see
 \cite{OR73}. Therefore, we only consider the case when Fichera function is negative on the degenerate boundary $y=0$ in this paper, and leave the other case in the future study.

To resolve solvability issue, we adopt similar methodology of
\cite{CSYY13} to work on a PDE of truncated version backward  in time using appropriate penalty function and mollification. The main difference
is that \cite{CSYY13} studies constant drift and volatility, while the current
paper considers functions of drift and volatility and the negative Fichera 
function plays crucial role in the proof.

Uniqueness
issue is usually tackled by comparison result implied by Ishii's lemma
with notions of viscosity solution, see \cite{CIL92}. However, this approach
does not apply in this problem due to the fast growth of drift and
volatility functions on unbounded domain. The approach to establish 
uniqueness in our paper is similar to the classical verification theorem
conducted to classical PDE solution. In fact, a careful construction  leads
to a local regularity of the solutions in Sobolev space, and this
enables us to apply generalized Ito's formula (see \cite{Kry80})  
with weak derivatives. Note that this approach not only provides 
uniqueness of strong solution of PDE, but also provides that 
the value function of American option is exactly the unique strong 
solution.

In the next section, we first introduce the generalized stochastic volatility model.  Section \ref{sec:strongsol} shows the existence of strong solution to the truncated version of the variational inequality. We characterize the free boundary in section \ref{sec:freeboundary}. Section \ref{sec:uniqueness}
shows that the value function of the American option price is the unique  strong solution to the variational inequality with appropriate boundary datum. Concluding remarks are given in section \ref{sec:conclusion}.

\section{
Stochastic volatility model}\label{sec:model}

Let $(\Omega, \mathcal{F}, \mathbb{P})$ be a filtered probability space with a filtration $(\mathcal{F}_t)_{t\geq0}$ that satisfies the usual conditions, $W_t$ and $B_t$ be two standard Brownian motions with correlation $\rho$. Suppose the stock
price follows
$${\rm (Stk)} \quad d X_s = X_s(rds+ \sigma(Y_s) d W_s),\quad X_t = x > 0,$$
 and  volatility follows
$${\rm (Vol)} \quad d Y_s = \mu (Y_s) d s + b(Y_s) d B_s, \quad Y_t = y >0.$$
 Let $X^{x,t}$ and $Y^{y,t}$ be dynamics satisfying (Stk) and (Vol) with respective initial conditions on superscripts.

We consider an American put option underlying the asset $X_s$ with strike $K>0$ and maturity $T$, which has the payoff $(K-X_\tau)^+$ at the
exercise time $\tau\in\mathcal T_{t,T}$. Here
$\mathcal T_{t,T}$ denotes the set of all stopping times in $[t,T]$.
Define value function of the optimal stopping by
 \beq\label{eq:defV}
 V(x,y,t) =\sup\limits_{\tau\in\mathcal T_{t,T}} \mathbb{E}_{x,y,t} [e^{-r(\tau-t)}(K-X_\tau)^+], \quad
(x,y,t) \in \bar{\mathbb{R}}^+\times\bar{\mathbb{R}}^+\times[0,T].
 \eeq
 Following assumptions will be imposed:
\begin{itemize}
\item [(A1)]
$\mu, \sigma^2, b^2$ are locally Lipschitz continuous on $\mathbb R$
with $\mu(0) = \sigma(0) = b(0)=0$ and $\sigma(y), b(y)>0,
\s'(y)\geq0$ for all $y>0$.
\item [(A2)] $|\mu| + b$ is at most linear growth, and
$(\sigma^2)'$ is at most polynomial growth.
\end{itemize}
Under the above assumptions (A1)--(A2), we have unique non-negative, non-explosive strong solutions for both (Stk) and (Vol). Furthermore, $X^{x,t}_s>0$ for all $s>t$. Such a stock price model includes Heston model.

Provided that the value function $V(x,y,t)$ is smooth enough, applying dynamic programming principle and It\^o's formula, the value of the option $V(x,y,t)$ formally satisfies the  variational inequality
\beq  \label{eq:viV}
 \left\{
 \begin{array}{ll}
 \min\Big\{-\p_t V-{\cal L}_x V,\; V-(K-x)^+
 \Big\}=0,\quad(x,y,t)\in Q:=\mathbb{R}^+\times \mathbb{R}^+\times[0,T),
 \vspace{2mm} \\
 V(x,y,T)=(K-x)^+,\quad (x,y)\in\mathbb{R}^+\times \mathbb{R}^+,
 \end{array}
 \right.
 \eeq
 where
 $${\cal L}_x V=\frac{1}{2}x^2\s^2(y)\p_{xx}V+\rho x\s(y)b(y)\p_{xy}V+\frac{1}{2}b^2(y)\p_{yy}V+rx\p_xV+\mu(y)
 \p_yV-rV.$$
 On the boundary $x=0$,
 the Fichera condition on linear parabolic equation suggests us not to impose any boundary condition. On the boundary $y=0$,
 the Fichera function is
 \bee
 F=\Big[\mu(y)-\frac{1}{2}\rho \s(y)b(y)-b(y)b'(y)\Big]\Big|_{y=0}
 =\mu(0)-\lim\limits_{y\rightarrow0}b(y)b'(y).
 \eee
 So
 \begin{enumerate}
 \item [(F1)] when $\mu(0)<\lim\limits_{y\rightarrow0}b(y)b'(y)$, one has to impose the boundary condition;
\item [(F2)]  when $\mu(0)\geq\lim\limits_{y\rightarrow0}b(y)b'(y)$,
 one should not impose any boundary condition.
 \end{enumerate}
Although the problem \eqref{eq:viV} is a variational inequality instead of linear PDE, the Fichera condition in the linear PDE theory does not
directly prove the existence of solution to the problem \eqref{eq:viV}.
Throughout this paper, we study the relation between value function \eqref{eq:defV} and PDE of \eqref{eq:viV} with an appropriate boundary
data on $y = 0$  under the case (F1),  
while the case (F2) is left in the future study. 

Then, what boundary condition should be imposed on the boundary $y=0$? To proceed, we define $\nu:=\inf\{s>t: Y_s=0\}$ the first hitting time of the process $Y_s$ to the boundary $y=0$. Then for any stopping time $\tau>\nu$, we have
\beq\label{eq:nu}
e^{-r(\tau-t)}(K-X_\tau)^+\leq e^{-r(\nu-t)}(K-X_\nu)^+,
\eeq
thus
\bee
V(x,0,t)=\sup\limits_{\tau\in\mathcal{T}_{t,T}}\mathbb{E}_{x,0,t}[e^{-r(\tau-t)}(K-X_\tau)^+]\leq (K-x)^+.
\eee
On the other hand, by taking $\tau = t$,
\bee
V(x,0,t)=\sup\limits_{\tau\in\mathcal{T}_{t,T}}\mathbb{E}_{x,0,t}[e^{-r(\tau-t)}(K-X_\tau)^+]\geq (K-x)^+.
\eee
Hence
 \beq\label{eq:bc}
 V(x,0,t)=(K-x)^+=\sup\limits_{\tau\in\mathcal{T}_{t,T}}\mathbb{E}_{x,0,t}[e^{-r(\tau-t)}(K-X_\tau)^+],\quad(x,t)\in \mathbb{R}^+\times[0,T).
 \eeq
 Due to the non-linearity of the variational inequality \eqref{eq:viV},
 we may not expect the heuristic argument in the above assuming
 the enough regularity on $V$. In addition, Fichera condition on the
 boundary data is suitable only for linear second order PDE,
 see \cite{OR73}. Our objective in this paper is to justify the regularity of
 the value function in the Sobolev space, so that the value function
 \eqref{eq:defV} can be characterized as the unique solution of
 the variational inequality \eqref{eq:viV} and an additional boundary data
 \eqref{eq:bc}.

\section{Solvability on the transformed problem}
\label{sec:strongsol}
In order to obtain the existence of solution to the variational inequality \eqref{eq:viV} with boundary condition \eqref{eq:bc} in Sobolev space, we consider the existence of strong solution to the associated transformed problem of  \eqref{eq:viV} with boundary condition \eqref{eq:bc} in this section.

To proceed, we take a simple logarithm transformation
to the variational inequality.
 Let $s=\ln x,\;\theta=T-t,\;u(s,y,\theta)=V(x,y,t)$, then
 \bee
 {\cal L}_x V&=&\frac{1}{2}\s^2(y)\p_{ss}u+\rho\s (y)b(y)\p_{sy}u+\frac{1}{2}b^2(y)\p_{yy}u+\Big(r-\frac{1}{2}\s^2(y)\Big)\p_su+\mu(y)
\p_yu-ru\\&:=& {\cal L}_s u.
 \eee
 Thus $u(s,y,\theta)$ satisfies
 \beq  \label{eq:viu}
 \left\{
 \begin{array}{ll}
 \min\Big\{\p_\theta u-{\cal L}_s u,\; u-(K-e^s)^+
 \Big\}=0,\;\;\;& (s,y,\theta)\in\mathcal {Q}:=\mathbb{R}\times \mathbb{R}^+\times(0,T],
 \vspace{2mm} \\
 u(s,y,0)=(K-e^s)^+,\;\;\;&s\in \mathbb{R},\;y\in\mathbb{R}^+,
 \vspace{2mm} \\
 u(s,0,\theta)=(K-e^s)^+,\;\;\;&s\in \mathbb{R},\;\theta\in(0,T].
 \end{array}
 \right.
 \eeq

Problem \eqref{eq:viu} is a variational inequality, we apply penalty approximation techniques to show the existence of strong solution to \eqref{eq:viu}. Suppose $u_\ep(s,y,\theta)$ satisfies
 \beq  \label{eq:uep}
 \left\{
 \begin{array}{ll}
 \p_\theta u_\ep-{\cal L}_s^\ep u_\ep+\beta_\ep(u_\ep-\pi_\ep(K-e^s))
 =0,\;\;\;& (s,y,\theta)\in\mathcal {Q},
 \vspace{2mm} \\
 u_\ep(s,y,0)=\pi_\ep(K-e^s),\;\;\;&s\in \mathbb{R},\;y\in\mathbb{R}^+,
 \vspace{2mm} \\
 u_\ep(s,0,\theta)=\pi_\ep(K-e^s),\;\;\;&s\in \mathbb{R},\;\theta\in(0,T].
 \end{array}
 \right.
 \eeq
 where ${\cal L}_s^\ep u=\frac{1}{2}(\s^2(y)+\ep)\p_{ss}u+\rho(\s(y) b(y)+\ep)\p_{sy}u+\frac{1}{2}(b^2(y)+\ep)\p_{yy}u+(r-\frac{1}{2}\s^2(y)-\frac{1}{2}\ep)\p_su+\mu
 (y)\p_yu-ru$, and $\beta_\ep(\xi)$ (Fig. 1.), $\pi_\ep(\xi)$ (Fig. 2.) satisfy
 \bee
 &&\beta_{\varepsilon}(\xi)\in C^{2}(-\infty,\
 +\infty),\ \beta_{\varepsilon}(\xi)\leq 0,\;
 \beta_{\varepsilon}(0)=-2r(K+1),\;
 \beta'_{\varepsilon}(\xi)\geq 0,\ \beta''_{\varepsilon}(\xi)\leq 0.\\
 &&\lim\limits_{\varepsilon \rightarrow 0}\beta _{\varepsilon
 }(\xi)=\left\{
 \begin{array}{ll}
 0, & \xi>0,\vspace{2mm} \\
 -\infty , & \xi<0,%
 \end{array}%
 \right.\quad
 \pi _{\varepsilon }(\xi)=\left\{
 \begin{array}{ll}
 \xi, & \xi\geq \varepsilon, \vspace{2mm} \\
 \nearrow, & |\xi|\leq \varepsilon, \vspace{2mm} \\
 0, & \xi\leq -\varepsilon,%
 \end{array}%
 \right.\\
 &&\pi_{\varepsilon}(\xi)\in C^{\infty},\
 0\leq\pi'_{\varepsilon}(\xi)\leq 1,\;\pi''_\ep(\xi)\geq0,\
 \lim\limits_{\varepsilon\rightarrow 0}\pi_{\varepsilon}(\xi)=\xi^{+}.
 \eee

\begin{center}
  \begin{picture}(600,120)
    \put(50,80){\vector(1,0){130}}
    \put(110,10){\vector(0,1){95}}
    \qbezier(102,15)(115,80)(120,80)
    \put(185,80){$\xi$}
    \put(120,85){$\varepsilon$}
    \put(111,50){$-C_0$}
    \put(110,52){\circle*{2}}
    \put(120,80){\circle*{2}}
    \put(70,-5){Fig. 1. \ \ $\beta_\varepsilon (\xi)$ }

    \put(230,40){\vector(1,0){155}}
    \put(300,10){\vector(0,1){95}}
    \qbezier(270,40)(298,37)(317,58)
    \put(300,40){\line(1,1){50}}
    \put(280,40){\circle*{2}}
    \put(320,40){\circle*{2}}
    \put(265,30){$-\varepsilon$}
    \put(315,30){$\varepsilon$}
    \put(390,40){$\xi$}
    \put(270,-5){Fig. 2. \ \ $\pi_\varepsilon (\xi)$ }
  \end{picture}
\end{center}

 Since $\mathcal{Q}=(-\infty,+\infty)\times(0,+\infty)\times(0,T]$ is infinitely, we consider the truncated version of \eqref{eq:uep}, denote
 $\mathcal {Q}^N=(-N,N)\times(0,N)\times(0,T]$, let $u^N_\ep(s,y,\theta)$ satisfy
 \beq  \label{eq:unep}
 \left\{
 \begin{array}{ll}
 \p_\theta u^N_\ep-{\cal L}_s^\ep u^N_\ep+\beta_\ep(u^N_\ep-\pi_\ep(K-e^s))
 =0,\;\;\;& (s,y,\theta)\in \mathcal {Q}^N,
 \vspace{2mm} \\
 u^N_\ep(s,y,0)=\pi_\ep(K-e^s),\;& (s,y)\in(-N,N)\times(0,N),
 \vspace{2mm} \\
 u^N_\ep(s,0,\theta)=\pi_\ep(K-e^s),\;& (s,\theta)\in (-N,N)\times(0,T],
 \vspace{2mm} \\
 \p_yu^N_\ep(s,N,\theta)=0,\;& (s,\theta)\in(-N,N)\times (0,T],
 \vspace{2mm}\\
 u^N_\ep(-N,y,\theta)=\pi_\ep(K-e^{-N}),\;& (y,\theta)\in(0,N)\times (0,T],
 \vspace{2mm}\\
 \p_su^N_\ep(N,y,\theta)=0,\;& (y,\theta)\in(0,N)\times (0,T].
 \end{array}
 \right.
 \eeq

\begin{lemma}
 For any fixed $\ep,\;N>0$, there exists a unique solution $u^N_\ep\in
 W^{2,1}_p(\mathcal {Q}^N)$ to the problem (\ref{eq:unep}), and
 \be
 &&\pi_\ep(K-e^s)\leq u^N_\ep(s,y,\theta)\leq K+1,\label{eq:estunep}\\
 &&\p_\theta u^N_\ep(s,y,\theta)\geq0.\label{eq:estptunep}
 \ee

\end{lemma}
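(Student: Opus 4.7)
The plan is to prove the three claims in the natural order: existence/uniqueness in $W^{2,1}_p$, then the $L^\infty$ bounds via comparison, then the monotonicity in $\theta$ by differentiating the equation.

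For the existence and uniqueness of $u^N_\ep \in W^{2,1}_p(\mathcal{Q}^N)$, I would note that on the bounded cylinder $\mathcal{Q}^N$ the equation is uniformly parabolic (the diffusion has $\ep$ added to every principal coefficient), and because $\sigma^2,b^2,\mu$ are locally Lipschitz on the bounded strip $y\in(0,N)$ while $\beta_\ep$ and $\pi_\ep$ are smooth with bounded derivatives for fixed $\ep$, the problem is a standard semilinear parabolic problem with compatible mixed Dirichlet--Neumann boundary data. Move the zeroth-order nonlinearity to the right-hand side and run a fixed-point iteration in $L^\infty$: each linear step produces a $W^{2,1}_p$ solution by Ladyzhenskaya--Solonnikov--Uraltseva (Chapter V), and the Lipschitz constant of $\beta_\ep$ gives a contraction on small time slabs, which are concatenated to cover $[0,T]$. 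Uniqueness is immediate from the same Lipschitz estimate combined with a Grönwall inequality.

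For the two-sided bound \eqref{eq:estunep}, I would construct explicit barriers and invoke the parabolic comparison principle. The supersolution is the constant $\bar u \equiv K+1$: since $\pi_\ep(K-e^s)\le K$ (direct from the definition of $\pi_\ep$ for $\ep\le 1$), the argument of $\beta_\ep$ is at least $1>\ep$, so $\beta_\ep(\bar u-\pi_\ep(K-e^s))=0$, leaving $\partial_\theta\bar u-\mathcal L^\ep_s\bar u=r(K+1)\ge 0$, and $\bar u$ dominates the parabolic data. The subsolution is $\underline u(s):=\pi_\ep(K-e^s)$, where the delicate point is precisely why $\beta_\ep(0)=-2r(K+1)$ was chosen. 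Direct differentiation gives
\[
\partial_\theta\underline u-\mathcal L^\ep_s\underline u+\beta_\ep(0)=-\tfrac12(\sigma^2+\ep)e^{2s}\pi_\ep''(K-e^s)+re^s\pi_\ep'(K-e^s)+r\pi_\ep(K-e^s)-2r(K+1),
\]
and since $\pi_\ep''\ge 0$ and the middle two terms are each bounded above by $r(K+\ep)$ (using $\pi_\ep'=0$ outside $[-\ep,\ep]$ and $\pi_\ep\le K+\ep$), the whole expression is $\le 2r(K+\ep)-2r(K+1)\le 0$ for $\ep\le 1$. Checking boundary data ($\underline u$ agrees with the prescribed values at $\theta=0$, $y=0$, $s=-N$ and has vanishing $\partial_s,\partial_y$ at $s=N$, $y=N$ once $N$ is large), the comparison principle yields $\underline u\le u^N_\ep\le\bar u$.

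For the monotonicity \eqref{eq:estptunep}, I would differentiate the PDE in $\theta$. Working with the difference quotient $w^h:=(u^N_\ep(\cdot,\cdot,\theta+h)-u^N_\ep(\cdot,\cdot,\theta))/h$ to stay within the $W^{2,1}_p$ class, $w^h$ solves a linear uniformly parabolic equation
\[
\partial_\theta w^h-\mathcal L^\ep_s w^h+c^h(s,y,\theta)w^h=0,
\]
with $c^h\ge 0$ (a mean-value average of $\beta_\ep'\ge 0$), zero Dirichlet data on the lateral pieces where $u^N_\ep$ is $\theta$-independent, and homogeneous Neumann data on the remaining pieces. The initial datum, obtained by passing $\theta\downarrow 0$ in the original equation, is
\[
\mathcal L^\ep_s\pi_\ep(K-e^s)-\beta_\ep(0)=\mathcal L^\ep_s\pi_\ep(K-e^s)+2r(K+1),
\]
which is nonnegative by exactly the subsolution estimate above. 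The parabolic maximum principle then gives $w^h\ge 0$, and $h\downarrow 0$ yields $\partial_\theta u^N_\ep\ge 0$ in the $W^{2,1}_p$ sense.

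The main obstacle, and the crux of the whole lemma, is the subsolution estimate on $\pi_\ep(K-e^s)$: the constant $2r(K+1)$ has to beat $r\pi_\ep+re^s\pi_\ep'$ uniformly in $\ep$, which is exactly the reason for the specific normalization $\beta_\ep(0)=-2r(K+1)$. Everything else (comparison, linear $W^{2,1}_p$ theory, max principle for $w^h$) is standard once this barrier is in hand.
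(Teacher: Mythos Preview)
Your argument for existence/uniqueness and for the two-sided bound \eqref{eq:estunep} is essentially identical to the paper's: the same fixed-point scheme, the same constant supersolution $K+1$, and the same subsolution $\pi_\ep(K-e^s)$ with the same computation showing that the normalization $\beta_\ep(0)=-2r(K+1)$ is exactly what absorbs the terms $r\pi_\ep+re^s\pi_\ep'$.

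For the monotonicity \eqref{eq:estptunep} you take a different route. The paper does not differentiate in $\theta$; instead it sets $u^\delta(s,y,\theta):=u^N_\ep(s,y,\theta+\delta)$, observes that $u^\delta$ solves the \emph{same} nonlinear penalized equation with initial datum $u^N_\ep(\cdot,\cdot,\delta)\ge\pi_\ep(K-e^s)=u^N_\ep(\cdot,\cdot,0)$ (this inequality being precisely the lower bound just proved), and applies the comparison principle for the nonlinear problem directly to conclude $u^\delta\ge u^N_\ep$. Your linearization via difference quotients and $c^h\ge 0$ is correct, but note that the nonnegativity of the initial datum $w^h(\cdot,\cdot,0)=(u^N_\ep(\cdot,\cdot,h)-\pi_\ep(K-e^s))/h$ is most cleanly obtained from the lower bound you already have, rather than by ``passing $\theta\downarrow 0$ in the equation,'' which presupposes continuity of $\partial_\theta u^N_\ep$ up to $\theta=0$ that a bare $W^{2,1}_p$ statement does not give. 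The paper's time-shift argument sidesteps this regularity issue entirely and avoids the mean-value linearization step; your approach, once the initial datum is justified via the lower bound, has the advantage of yielding the same machinery later used to differentiate in $s$ and $y$.
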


\begin{proof}
For any fixed $\ep,\;N>0$, it is not hard to show by the fixed
 point theorem that problem (\ref{eq:unep}) has a solution $u^N_\ep\in
 W^{2,1}_p(\mathcal {Q}^N)$, and
 \bee
 |u^N_\ep|_{0}\leq C(|\beta_\ep(u^N_\ep-\pi_\ep(K-e^s))|_0+|\pi_\ep(K-e^s)|_0)\leq
 C(|\beta_\ep(-K-1)|+K+1)\leq C.
 \eee
   The proof of uniqueness is a standard way
 as well.

 Since
 \bee
 &&\p_\theta \pi_\ep(K-e^s)-{\cal L}_s^\ep (\pi_\ep(K-e^s))+\beta_\ep(0)\\
 &=&-\frac{1}{2}(\s^2(y)+\ep)(\pi''_\ep(\cdot)e^{2s}-\pi'_\ep(\cdot)e^s)+\Big(r-\frac{1}{2}\s^2(y)-\frac{1}{2}\ep\Big)\pi'_\ep(\cdot)e^s+r\pi_\ep(K-e^s)+\beta_\ep(0)\\
 &\leq&r(K+\ep)+r(K+\ep)+\beta_\ep(0)\leq 0.
 \eee
 Combining with the initial and boundary conditions,  we know $\pi_\ep(K-e^s)$
 is a subsolution of \eqref{eq:unep}.
 Similarly we know $K+1$ is a supersolution of \eqref{eq:unep}.

  Next we will prove (\ref{eq:estptunep}). Set
  $u^\delta(s,y,\theta):=u^N_\ep(s,y,\theta+\delta)$, then
  $u^\delta(s,y,\theta)$ satisfies
 \bee
 \left\{
 \begin{array}{ll}
 \p_\theta u^\delta-{\cal L}_s^\ep u^\delta+\beta_\ep(u^\delta-\pi_\ep(K-e^s))
 =0,
 \vspace{2mm} \\
 u^\delta(s,y,0)=u^N_\ep(s,y,\delta)\geq\pi_\ep(K-e^s)=u^N_\ep(s,y,0),
 \vspace{2mm} \\
 u^\delta(s,0,\theta)=\pi_\ep(K-e^s),
 \vspace{2mm} \\
 \p_yu^\delta(s,N,\theta)=0,
 \vspace{2mm}\\
 u^\delta(-N,y,\theta)=\pi_\ep(K-e^{-N}),
 \vspace{2mm}\\
 \p_su^\delta(N,y,\theta)=0.
 \end{array}
 \right.
 \eee
 Applying the comparison principle, we have
 \bee
 u^\delta(s,y,\theta)\geq u^N_\ep(s,y,\theta),\quad
 (s,y,\theta)\in(-N,N)\times(0,N)\times(0,T-\delta],
 \eee
 which yields $\p_\theta u^N_\ep\geq0$.
\end{proof}

Letting $N\rightarrow+\infty$, the existence of strong solution to the penalty problem \eqref{eq:uep} with some estimates is given in the following theorem.
 \begin{lemma}
 For any fixed $\ep>0$, there exists a unique solution
 $u_\ep(s,y,\theta)\in W^{2,1}_{p,loc}(\mathcal {Q})
 \cap C^1(\overline{\mathcal {Q}})$ to
 the problem (\ref{eq:uep}) for any $1<p<+\infty$, and
 \be
 &&\pi_\ep(K-e^s)\leq u_\ep(s,y,\theta)\leq K+1,\label{eq:estuep}\\
 &&\p_\theta u_\ep(s,y,\theta)\geq0,\label{eq:estptuep}\\
 &&-e^s\leq \p_su_\ep(s,y,\theta)\leq0,\label{eq:estpsuep}\\
 &&\p_yu_\ep(s,y,\theta)\geq0.\label{eq:estpyuep}
 \ee
 \end{lemma}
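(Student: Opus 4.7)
The plan is to obtain $u_\ep$ as the limit of the truncated solutions $u^N_\ep$ from the previous lemma as $N\to\infty$, and then to derive the four pointwise estimates. For the passage to the limit, fix any interior subdomain $\mathcal{Q}'=(-M,M)\times(\delta,M)\times(0,T]$ compactly contained in $\mathcal{Q}$, with $\delta>0$. On $\mathcal{Q}'$ the coefficients of ${\cal L}_s^\ep$ are bounded and the operator is uniformly parabolic (since $\sigma^2+\ep\ge\ep$ and $b^2+\ep\ge\ep$), while the $L^\infty$ bound $|u^N_\ep|\le K+1$ makes the penalty source bounded uniformly in $N$. Standard interior $W^{2,1}_p$ estimates for linear parabolic equations then yield a uniform-in-$N$ bound $\|u^N_\ep\|_{W^{2,1}_p(\mathcal{Q}')}\le C(\mathcal{Q}')$ for every $p\in(1,\infty)$. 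A diagonal extraction together with the embedding $W^{2,1}_p\hookrightarrow C^{1,\alpha}$ for large $p$ produces a subsequence converging in $C^1_{\mathrm{loc}}(\overline{\mathcal{Q}})$ and weakly in $W^{2,1}_{p,\mathrm{loc}}(\mathcal{Q})$ to a limit $u_\ep$ satisfying \eqref{eq:uep}. Uniqueness follows from the maximum principle applied to the penalized equation, using $\beta_\ep'\ge 0$.

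Bounds \eqref{eq:estuep} and \eqref{eq:estptuep} are inherited directly by passing $N\to\infty$ in \eqref{eq:estunep}--\eqref{eq:estptunep}. For \eqref{eq:estpsuep}, I would differentiate \eqref{eq:uep} in $s$; since $s$ enters the equation only through the penalty source, $w=\p_s u_\ep$ satisfies
\bee
\p_\theta w-{\cal L}_s^\ep w+\beta_\ep'(\cdot)\,w=-\beta_\ep'(\cdot)\,\pi_\ep'(K-e^s)\,e^s\le 0,
\eee
with initial and $y=0$ data equal to $-\pi_\ep'(K-e^s)e^s\le 0$, giving $w\le 0$ by the maximum principle. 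For the lower bound, a direct computation shows that $e^s$ is an exact solution of the homogeneous operator $\p_\theta-{\cal L}_s^\ep$. Setting $z=\p_s u_\ep+e^s$ and using $\pi_\ep'\le 1$ yields
\bee
\p_\theta z-{\cal L}_s^\ep z+\beta_\ep'(\cdot)\,z=\beta_\ep'(\cdot)\,e^s\,(1-\pi_\ep'(K-e^s))\ge 0,
\eee
with nonnegative initial and $y=0$ data, so $z\ge 0$, i.e.\ $\p_s u_\ep\ge -e^s$.

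The main obstacle is \eqref{eq:estpyuep}: differentiating the equation in $y$ produces an inhomogeneity of the form $\sigma\sigma'\p_{ss}u_\ep+\rho(\sigma b)'\p_{sy}u_\ep+bb'\p_{yy}u_\ep-\sigma\sigma'\p_s u_\ep+\mu'\p_y u_\ep$, which involves second derivatives of $u_\ep$ and is not obviously one-signed, so the equation for $q=\p_y u_\ep$ is not a pure linear parabolic PDE in $q$ alone. The useful structural input is $\sigma'\ge 0$ from (A1) combined with the already-established sign $\p_s u_\ep\le 0$, which makes the first-order term $-\sigma\sigma'\p_s u_\ep$ nonnegative. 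My plan is to work at the truncated penalized level and compare $u^N_\ep(s,y,\theta)$ with its upward $y$-shift $u^N_\ep(s,y+h,\theta)$ on the overlap domain: both the initial datum and the $y=0$ Dirichlet datum are $y$-independent, so initial and boundary comparison are automatic, while the coefficient mismatch in the interior PDE is handled via a Gronwall-type argument with an exponential weight $e^{\lambda\theta}$ absorbing the second-derivative remainders into the principal part (using the uniform $W^{2,1}_p$ bounds). Sending $h\to 0^+$ and then $N\to\infty$ yields \eqref{eq:estpyuep}. The local nature of $W^{2,1}_{p,\mathrm{loc}}$ is essential throughout, since (A2) permits super-linear growth in the coefficients away from compact subdomains.
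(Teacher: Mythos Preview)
Your existence argument and the proofs of \eqref{eq:estuep}--\eqref{eq:estpsuep} match the paper's approach. The gap is in \eqref{eq:estpyuep}.

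You write the inhomogeneity after differentiating in $y$ as
\[
\sigma\sigma'\,\p_{ss}u_\ep+\rho(\sigma b)'\,\p_{sy}u_\ep+bb'\,\p_{yy}u_\ep-\sigma\sigma'\,\p_s u_\ep+\mu'\,\p_y u_\ep
\]
and then treat the second-derivative pieces as an obstruction to be absorbed by a Gronwall/weight trick. But three of these five terms are \emph{not} inhomogeneities at all: with $w_4=\p_y u_\ep$ one has $\p_{sy}u_\ep=\p_s w_4$, $\p_{yy}u_\ep=\p_y w_4$, and $\p_y u_\ep=w_4$, so they are simply first- and zeroth-order terms in $w_4$ and belong inside a modified (still uniformly parabolic) operator. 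After this absorption the equation for $w_4$ has the single source term
\[
\sigma(y)\sigma'(y)\bigl(\p_{ss}u_\ep-\p_s u_\ep\bigr),
\]
together with $w_4(s,y,0)=0$ and $w_4(s,0,\theta)\ge 0$. Your observation that $-\sigma\sigma'\p_s u_\ep\ge 0$ is therefore only half of what is needed; one must show $\p_{ss}u_\ep-\p_s u_\ep\ge 0$ (equivalently, convexity of $u_\ep$ in $x=e^s$). The paper obtains this by differentiating the $w_1$-equation once more in $s$ and applying the maximum principle to $w_3:=\p_{ss}u_\ep-\p_s u_\ep$, using $\pi_\ep''\ge 0$ and $\beta_\ep''\le 0$. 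With $w_3\ge 0$ and $\sigma'\ge 0$ the source for $w_4$ is nonnegative and the maximum principle gives $w_4\ge 0$ directly.

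Your alternative plan---comparing $u^N_\ep(s,y,\theta)$ with $u^N_\ep(s,y+h,\theta)$---does not close as stated. The coefficients of ${\cal L}^\ep_s$ depend on $y$, so the shifted function satisfies a \emph{different} equation; the discrepancy contains terms like $\tfrac12(\sigma^2(y+h)-\sigma^2(y))\p_{ss}u^N_\ep$ and $\tfrac12(b^2(y+h)-b^2(y))\p_{yy}u^N_\ep$, which are genuine second-order quantities of $u^N_\ep$ (not of the difference) and cannot be controlled by an exponential-in-$\theta$ weight or a Gronwall inequality alone. You would end up needing exactly the sign information on $\p_{ss}u_\ep-\p_s u_\ep$ that the paper proves.
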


 \begin{proof} For any fixed $\ep>0$, $R>0$, applying
 $W^{2,1}_p$ interior estimate with part of boundary \cite{LSU67} to the problem
 (\ref{eq:unep}) ($N>R$),  then
 \bee
 |u^N_\ep|_{W^{2,1}_{p}(\mathcal {Q}^R)}&\leq&C
 (|\beta_\ep(u^N_\ep-\pi_\ep(K-e^s))|_{L^\infty(\mathcal {Q}^R)}+|\pi_\ep(K-e^s)|_
 {W^{2,1}_p(\overline{\mathcal {Q}^R}\cap \{\theta=0\})})
 \leq C,
 \eee
 where $C$ depends on $\ep, R$ but is independent of $N$.
  Letting $N\rightarrow+\infty$, by the imbedding theorem, we know
 problem (\ref{eq:uep}) has a solution $u_\ep(s,y,\theta)\in W^{2,1}_{p,loc}(\mathcal {Q})\cap
 C^1(\overline{\mathcal {Q}})$. \eqref{eq:estuep}--\eqref{eq:estptuep} are
 consequences of \eqref{eq:estunep}--\eqref{eq:estptunep}.

 Now we aim to prove (\ref{eq:estpsuep}). Differentiate the equation in (\ref{eq:uep}) w.r.t. $s$
 and denote $w_1=\p_s
 u_\ep$, then
 \beq    \label{eq:psuep}
 \left\{
 \begin{array}{ll}
 \p_\theta w_1-{\cal L}_s^\ep
 w_1+\beta'_\ep(\cdot)w_1=-\beta'_\ep(\cdot)\pi'_\ep(\cdot)e^s
 \leq0,\;\;\;(s,y,\theta)\in\mathcal {Q},
 \vspace{2mm} \\
 w_1(s,y,0)=w_1(s,0,\theta)=-\pi'_\ep(\cdot)e^s\leq0.
 \end{array}
 \right.
 \eeq
 Applying the maximum principle \cite{Tso85} we know $w_1=\p_s
 u_\ep\leq0$. In view of
 \bee
 (\p_\theta-{\cal L}_s^\ep)(-e^s)+\beta'_\ep(\cdot)(-e^s)
 =-\beta'_\ep(\cdot)e^s\leq-\beta'_\ep(\cdot)\pi'_\ep(\cdot)e^s.
 \eee
 Combining with the initial and boundary conditions, applying the
 comparison principle we have
 \bee
 -e^s\leq w_1(s,y,\theta)=\p_s
 u_\ep(s,y,\theta)\leq0.
 \eee

 Finally we want to prove (\ref{eq:estpyuep}). We first differentiate
 (\ref{eq:psuep}) w.r.t. $s$, denote $w_2=\p_{ss}u_\ep$, we obtain
 \beq\label{eq:pssuep}
 \left\{
 \begin{array}{ll}
 \p_\theta w_2-{\cal L}_s^\ep
 w_2+\beta'_\ep(\cdot)w_2=-\beta'_\ep(\cdot)\pi'_\ep(\cdot)e^s+\beta'_\ep(\cdot)\pi''_\ep(\cdot)e^{2s}
 -\beta''_\ep(\cdot)[\pi'_\ep(\cdot)e^s+w_1]^2,
 \vspace{2mm} \\
 w_2(s,y,0)=w_2(s,0,\theta)=-\pi'_\ep(\cdot)e^s+\pi''_\ep(\cdot)e^{2s}.
 \end{array}
 \right.
 \eeq
 Set $w_3(s,y,\theta):=w_2(s,y,\theta)-w_1(s,y,\theta)$, in view of (\ref{eq:psuep}) and (\ref{eq:pssuep})
 \bee
 \left\{
 \begin{array}{ll}
 \p_\theta w_3-{\cal L}_s^\ep
 w_3+\beta'_\ep(\cdot)w_3=\beta'_\ep(\cdot)\pi''_\ep(\cdot)e^{2s}
 -\beta''_\ep(\cdot)[\pi'_\ep(\cdot)e^s+w_1]^2\geq0,
 \vspace{2mm} \\
 w_3(s,y,0)=w_3(s,0,\theta)=\pi''_\ep(\cdot)e^{2s}\geq0.
 \end{array}
 \right.
 \eee
 Applying maximum principle we know $w_3(s,y,\theta)\geq0$, i.e., $\p_{ss} u_\ep-\p_s
 u_\ep\geq0$.

 Differentiate (\ref{eq:uep}) w.r.t. $y$, denote
 $w_4(s,y,\theta)=\p_yu_\ep(s,y,\theta)$. Then we get
 \bee
 \left\{
 \begin{array}{ll}
 \p_\theta w_4-{\cal L}_s^\ep
 w_4-\rho(\s'(y)b(y)+\s (y)b'(y))\p_sw_4-b(y)b'(y)\p_yw_4\\
 \hspace{2cm}-\mu'(y)w_4+\beta'_\ep(\cdot)w_4=\s(y)\s'(y)(\p_{ss}u_\ep-\p_su_\ep),
 \vspace{2mm} \\
 w_4(s,y,0)=0,
 \vspace{2mm} \\
 w_4(s,0,\theta)\geq0.
 \end{array}
 \right.
 \eee
 Since $\s'(y)\geq0$, $u_\ep\in
 C^{2,1}(\mathcal {Q})$ and $\p_{ss} u_\ep-\p_s u_\ep\geq0$,
 by maximum principle \cite{Tso85}, we have $\p_yu_\ep(s,y,\theta)\geq0$.
 \end{proof}

Now we are able to show the solvability on the variational inequality \eqref{eq:viu} in the Sobolev space by the approximation of a subsequence of $\{u_\ep\}$.
 \begin{lemma}\label{lem:u} There exists a solution $u\in W^{2,1}_{p}(\mathcal {Q}_\delta^N\setminus B_h)
$ to the problem (\ref{eq:viu}), where $\mathcal
{Q}_\delta^N=(-N,N)\times(\delta,N)\times(0,T],\;B_h=(\ln K-h,\ln
 K+h)\times(0,+\infty)\times(0,T]$ for any $N,\;\delta,\;h>0$. Moreover,
 \be
 &&(K-e^s)^+\leq u(s,y,\theta)\leq K+1,\label{eq:estu}\\
 &&\p_\theta u(s,y,\theta)\geq0,\label{eq:estptu}\\
 &&-e^s\leq \p_su(s,y,\theta)\leq0,\label{eq:estpsu}\\
 &&\p_yu(s,y,\theta)\geq0.\label{eq:estpyu}
 \ee
 \end{lemma}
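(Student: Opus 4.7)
\medskip
\noindent\textbf{Proof plan.} My plan is to pass to the limit $\ep \to 0$ in the family $\{u_\ep\}$ from the previous lemma, working on the subdomain $\mathcal{Q}_\delta^N \setminus B_h$ where both the operator and the obstacle behave well uniformly in $\ep$. The two geometric exclusions play complementary roles I would emphasize from the outset: restricting to $y \ge \delta > 0$ secures uniform parabolicity of $\mathcal{L}_s^\ep$ because $\s, b > 0$ on $(0,\infty)$ by (A1), so $\s^2(y)+\ep$ and $b^2(y)+\ep$ are bounded below uniformly in $\ep$ on $\{y \ge \delta\}$; while removing the strip $B_h$ keeps us away from the kink of $(K-e^s)^+$ at $s = \ln K$, where derivatives of $\pi_\ep(K-e^s)$ would otherwise blow up as $\ep \to 0$.

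First I would establish a uniform bound on the penalty term. Using \eqref{eq:estuep}, namely $u_\ep \ge \pi_\ep(K-e^s)$, the argument of $\beta_\ep$ is nonnegative, and the monotonicity $\beta_\ep' \ge 0$ together with $\beta_\ep(0) = -2r(K+1)$ and $\beta_\ep \le 0$ give
\[ -2r(K+1) \;\le\; \beta_\ep\bigl(u_\ep - \pi_\ep(K-e^s)\bigr) \;\le\; 0. \]
Thus the nonlinear term in \eqref{eq:uep} acts as a uniformly bounded source. Combining this with the $L^\infty$ bound \eqref{eq:estuep} and the uniform parabolicity on $\{y\ge \delta/2\}$, the $W^{2,1}_p$ interior estimate with part of the boundary from \cite{LSU67}, applied on a slightly enlarged region $\mathcal{Q}_{\delta/2}^{N+1}\setminus B_{h/2}$ and then restricted to $\mathcal{Q}_\delta^N \setminus B_h$, would produce a bound
\[ \|u_\ep\|_{W^{2,1}_p(\mathcal{Q}_\delta^N \setminus B_h)} \;\le\; C \]
independent of $\ep$.

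Next I would extract by weak compactness a subsequence $u_{\ep_k} \rightharpoonup u$ in $W^{2,1}_p$, with strong convergence in $C^{1+\al,(1+\al)/2}$ by the parabolic embedding. On the continuation set $\{u > (K-e^s)^+\}$ the inequality $u_{\ep_k} - \pi_{\ep_k}(K-e^s) > 0$ holds eventually, which forces $\beta_{\ep_k}(\cdot) \to 0$ and hence $\p_\theta u - \mathcal{L}_s u = 0$ a.e.\ there; globally, $\beta_{\ep_k} \le 0$ and \eqref{eq:estuep} survive in the limit to yield $\p_\theta u - \mathcal{L}_s u \ge 0$ and $u \ge (K-e^s)^+$, which together constitute the $\min$-formulation in \eqref{eq:viu}. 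The pointwise estimates \eqref{eq:estu}--\eqref{eq:estpyu} then transfer from \eqref{eq:estuep}--\eqref{eq:estpyuep} using the strong convergence of $u_{\ep_k}$ and its first derivatives, together with $\pi_\ep(K-e^s) \to (K-e^s)^+$.

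The hard part will be two-fold. First, the uniform bound on $\beta_\ep$ hinges on the one-sided estimate $u_\ep \ge \pi_\ep(K-e^s)$; without this barrier the penalty could be unbounded below and the whole limiting scheme would collapse, so the careful sub/supersolution argument of the previous lemma is what makes this step possible. Second, and more structurally, the strip $B_h$ cannot be removed by sharpening the estimate, because $\pi''_\ep(K-e^s)$ concentrates near $s = \ln K$ as $\ep \to 0$ and uniform $W^{2,1}_p$ control genuinely fails there. This loss of $W^{2,1}_p$ regularity across the free boundary is characteristic of American option problems and is precisely what forces the later uniqueness argument to rely on the generalized It\^o formula rather than a classical verification theorem.
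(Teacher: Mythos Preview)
Your plan for the interior limit is essentially the paper's own argument: use $u_\ep\ge\pi_\ep(K-e^s)$ to pin the penalty between $-2r(K+1)$ and $0$, invoke uniform parabolicity on $\{y\ge\delta\}$ together with the $C^{\alpha,\alpha/2}$ and $W^{2,1}_p$ interior estimates, and pass to a (diagonal) subsequence. The only minor comment here is that you should say explicitly that a single $u$ must serve for \emph{all} $N,\delta,h>0$; the paper does this by a diagonal extraction over the exhaustion $\mathcal{Q}^N_{1/(k+1)}$, while your sketch extracts on one fixed domain.

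There is, however, a genuine gap: you never verify the boundary condition $u(s,0,\theta)=(K-e^s)^+$, which is part of problem~\eqref{eq:viu}. Your estimates live on $\{y\ge\delta\}$, so the limit $u$ is a priori only defined (and continuous) away from $y=0$; nothing in your argument forces $u(s,y,\theta)\to(K-e^s)^+$ as $y\to 0^+$. This is not a technicality---it is precisely where the hypothesis (F1) enters. The paper builds, for each $s_0\neq\ln K$, a supersolution barrier $w_0(s,y,\theta)=\pi_{\ep_0}(K-e^s)+Ay^\alpha$ with $0<\alpha<1$, and checks that
\[
\p_\theta w_0-\mathcal{L}_s^\ep w_0+\beta_\ep(w_0-\pi_\ep(K-e^s))\ge 0
\]
near $y=0$; the dominant competing terms are $\tfrac12(b^2(y)+\ep)A\alpha(1-\alpha)y^{\alpha-2}$ and $-\mu(y)\alpha Ay^{\alpha-1}$, and the negative Fichera condition $\mu(0)<\lim_{y\to 0}b(y)b'(y)$ (together with $b^2(y)=O(y)$) is exactly what makes the former dominate on some strip $(0,\delta_0)$ independent of $\ep$. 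Comparison then traps $u_\ep$ between $\pi_\ep(K-e^s)$ and $\pi_{\ep_0}(K-e^s)+Ay^\alpha$, and letting $\ep\to 0$, $y\to 0$ yields $u(s_0,0,\theta)=(K-e^{s_0})^+$. Without this barrier step your construction does not produce a solution of~\eqref{eq:viu}, only of the variational inequality on $\mathbb{R}\times\mathbb{R}^+\times(0,T]$ with no control at the degenerate boundary.
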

 \begin{proof} Since $\s(y),\ b(y)$ are continuous
 and $\s'(y)\geq0$,
 in $\mathcal {Q}_{\frac{1}{2}}^N$, we have
 $\s^2(y)+\ep\geq\s^2(\frac{1}{2})>0$, and $\lambda_1|\xi|^2\leq a^{ij}\xi_i\xi_j\leq
 \Lambda_1|\xi|^2$, with $\Lambda_1,\;
 \lambda_1$ independent of $\ep$.
 Applying $C^{\alpha,\alpha/2}$ estimate \cite{Lie96}
  and $W^{2,1}_p$ interior estimate with part of boundary \cite{LSU67}, we have
 \bee
 |u_\ep|_{C^{\alpha,\alpha/2}\big(\overline{\mathcal {Q}_{\frac{1}{2}}^N}\big)}&\leq&
 C(|u_\ep|_{0}+|\beta_\ep(u_\ep-\pi_\ep(K-e^s))|_{0}+[\pi_\ep(K-e^s)]_{C^\gamma(-N,N)})
 \leq C_1,\\
 |u_\ep|_{W^{2,1}_p(\mathcal {Q}_{\frac{1}{2}}^N\setminus B_h)}&\leq&
 C(|u_\ep|_{L^\infty}+|\beta_\ep(u_\ep-\pi_\ep(K-e^s))|_{L^\infty}+|(K-e^s)\vee0|_
 {W^{2,1}_p(\mathcal {Q}_{\frac{1}{2}}^N\setminus B_h)})
 \leq C_2,
 \eee
 where $C_1,\;C_2$ are independent of $\ep$ due to the estimate \eqref{eq:estuep}
 and the definitions of $\beta_\ep, \pi_\ep$.
 Thus there exists a
 subsequence of $\{u_\ep\}$, denote $\{u_\ep^{(1)}\}$, and
 $u^{(1)}\in W^{2,1}_p(\mathcal {Q}_{\frac{1}{2}}^N\setminus B_h)\cap
 C\Big(\overline{\mathcal {Q}_{\frac{1}{2}}^N}\Big)$, such that
 \bee
 &&u_\ep^{(1)}(s,y,\theta)\rightharpoonup u^{(1)}(s,y,\theta)\quad in\;W^{2,1}_p(\mathcal {Q}_{\frac{1}{2}}^N\setminus
 B_h)\;weakly,\\
 &&u_\ep^{(1)}(s,y,\theta)\rightarrow
 u^{(1)}(s,y,\theta)\quad in\;C\Big(\overline{\mathcal {Q}_{\frac{1}{2}}^N}\Big)\;uniformly.
 \eee
 In a same way, in  $\mathcal {Q}_{\frac{1}{3}}^N$, we have
 $\s^2(y)+\ep\geq\s^2(\frac{1}{3})$, and $\lambda_2|\xi|^2\leq a^{ij}\xi_i\xi_j\leq
 \Lambda_2|\xi|^2$, with $\Lambda_2,\;
 \lambda_2$ independent of $\ep$.
 Thus there exists $\{u_\ep^{(2)}\}\subseteq\{u_\ep^{(1)}\},
 \;u^{(2)}\in W^{2,1}_p(\mathcal {Q}_{\frac{1}{3}}^N\setminus B_h)\cap
 C\Big(\overline{\mathcal {Q}_{\frac{1}{3}}^N}\Big)$,  such that
 \bee
 &&u_\ep^{(2)}(s,y,\theta)\rightharpoonup u^{(2)}(s,y,\theta)\quad in\;W^{2,1}_p(\mathcal {Q}_{\frac{1}{3}}^N\setminus
 B_h)\;weakly,\\
 &&u_\ep^{(2)}(s,y,\theta)\rightarrow
 u^{(2)}(s,y,\theta)\quad in\;C\Big(\overline{\mathcal {Q}_{\frac{1}{3}}^N}\Big)\;uniformly.
 \eee
 Moreover,
 \bee
 u^{(2)}(s,y,\theta)=u^{(1)}(s,y,\theta),\quad(s,y,\theta)\in \mathcal {Q}_{\frac{1}{2}}^N.
 \eee
  Define
 $u(s,y,\theta)=u^{(k)}(s,y,\theta)$, if $(s,y,\theta)\in \mathcal {Q}^N_{\frac{1}{k+1}}$,
 abstracting diagram subsequence $\{u^{(k)}_{\ep_k}\}$, for any
 $\delta,\;h,\;N>0$, we have
 \bee
 &&u^{(k)}_{\ep_k}(s,y,\theta)\rightharpoonup u(s,y,\theta)\quad in\;W^{2,1}_p(\mathcal {Q}_{\delta}^N\setminus
 B_h)\;weakly,\\
 &&u^{(k)}_{\ep_k}(s,y,\theta)\rightarrow
 u(s,y,\theta)\quad in\;C(\overline{\mathcal {Q}_{\delta}^N})\;uniformly,
 \eee
 thus $u(s,y,\theta)\in W^{2,1}_p(\mathcal {Q}_{\delta}^N\setminus B_h)\cap
 C(\overline{\mathcal {Q}}\setminus
 \{y=0\})$ and $u(s,y,\theta)$ satisfies the variational inequality in
 \eqref{eq:viu} and the initial condition.

Next we will prove the continuity on the degenerate boundary $y=0$.
 For any $s_0\in \mathbb{R}\setminus\{\ln K\}$, then there exists $\ep_0>0$ such that
 $\pi_{\ep_0}(K-e^{s_0})=(K-e^{s_0})^+$, denote $w_0(s,y,\theta)=\pi_{\ep_0}(K-e^s)+Ay^\alpha\geq0$,
 with $0<\alpha<1$, and $A\geq 1$ to be determined, then for any $\ep<\ep_0$
 \bee
 &&\p_\theta w_0-\mathcal {L}^\ep_sw_0+\beta_\ep(w_0-\pi_\ep(K-e^s))\\
 &=&-\frac{1}{2}(\s^2(y)+\ep)\pi''_{\ep_0}(\cdot)e^{2s}
 -\frac{1}{2}(b^2(y)+\ep)A\alpha(\alpha-1)y^{\alpha-2}+r\pi'_{\ep_0}(\cdot)e^s\\
 &&-\mu(y)\alpha
 Ay^{\alpha-1}+rw_0+\beta_\ep(\pi_{\ep_0}(K-e^s)-\pi_{\ep}(K-e^s)+Ay^\alpha)\\
 &\geq&-\frac{1}{2}(\s^2(y)+\ep)\pi''_{\ep_0}(\cdot)e^{2s}
 +\frac{1}{2}(b^2(y)+\ep)A\alpha(1-\alpha)y^{\alpha-2}-\mu(y)\alpha
 Ay^{\alpha-1}+\beta_\ep(0),
 \eee
 since the negative Fichera function indicates $\mu(0)<\lim\limits_{y\rightarrow0}b(y)b'(y)$, in addition,
 $b^2(y)=O(y)$ or $b^2(y)=o(y)$ when $y\rightarrow0$, hence
 there exists $\delta_0>0$ small enough and independent of
 $\ep$ such that
 \bee
 -\frac{1}{2}(\s^2(y)+\ep)\pi''_{\ep_0}(\cdot)e^{2s}
 +\frac{1}{2}(b^2(y)+\ep)A\alpha(1-\alpha)y^{\alpha-2}-\mu(y)\alpha
 Ay^{\alpha-1}+\beta_\ep(0)\geq0
 \eee
 for any $y\in (0,\delta_0)$. Moreover, we can choose $A$ large enough such that
  $A\delta_0^\alpha\geq
 K+1$. Combining
 \bee
 \pi_{\ep_0}(K-e^s)+Ay^\alpha\geq\pi_\ep(K-e^s),\quad \ep<\ep_0,
 \eee
 applying comparison principle, we have
 \bee
 \pi_\ep(K-e^s)\leq u_\ep(s,y,\theta)\leq
 \pi_{\ep_0}(K-e^s)+Ay^\alpha,\quad (s,y,\theta)\in\mathbb{R}\times(0,\delta_0)\times(0,T].
 \eee
 Letting $\ep\rightarrow0^+$ we have
 \bee
 (K-e^s)^+\leq
 u(s,y,\theta)\leq\pi_{\ep_0}(K-e^s)+Ay^\alpha,\quad (s,y,\theta)\in\mathbb{R}\times(0,\delta_0)\times(0,T].
 \eee
 In particular
 \bee
 (K-e^{s_0})^+\leq
 u(s_0,y,\theta)\leq(K-e^{s_0})^++Ay^\alpha,\;\;\;y\in(0,\delta_0).
 \eee
 Letting $y\rightarrow0^+$, we obtain
 \bee
 u(s_0,0,\theta)=(K-e^{s_0})^+,\;\;\;\theta\in(0,T],
 \eee
 since $s_0$ is arbitrary, then $u(s,0,\theta)=(K-e^s)^+,\;s\in\mathbb{R}\setminus\{\ln
 K\}$.
 Therefore $u(s,y,\theta)$ is a solution to the problem (\ref{eq:viu}).
(\ref{eq:estu})--(\ref{eq:estpyu}) are consequences of
(\ref{eq:estuep})--(\ref{eq:estpyuep}).
 \end{proof}

\section{Characterization of free boundary to the problem \eqref{eq:viu}}\label{sec:freeboundary}
Variational inequality \eqref{eq:viu} is an obstacle problem, this section aims to characterize the free boundary arise from \eqref{eq:viu}.

 \begin{lemma}\label{thm:estu}
  The solution to the problem (\ref{eq:viu})
 satisfies
 $$u(s,y,\theta)>0,\quad (s,y,\theta) \in \mathbb{R}\times\mathbb{R}^+\times(0,T].$$
 \end{lemma}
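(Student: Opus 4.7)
The proof is naturally split by whether the obstacle $(K-e^s)^+$ is strictly positive. In the region $\{s<\ln K\}$, the lower bound from \eqref{eq:estu} already gives $u\geq K-e^s>0$, so nothing needs to be proved there. The content of the lemma is therefore strict positivity in the region $\{s\geq \ln K\}\cap\{y>0\}\cap\{\theta>0\}$, where the obstacle vanishes.

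I would argue by contradiction using a strong minimum principle. Suppose there were $P_0=(s_0,y_0,\theta_0)$ with $s_0\geq \ln K$, $y_0>0$, $\theta_0>0$, and $u(P_0)=0$. Since $u\geq 0$, this point would be an interior minimum. The variational inequality \eqref{eq:viu} implies $\p_\theta u-{\cal L}_s u\geq 0$ a.e.\ in a neighborhood of $P_0$, and by Lemma \ref{lem:u} this inequality is in the strong $W^{2,1}_p$ sense once we stay away from the thin strip $B_h$ around $s=\ln K$ and from the degenerate boundary $y=0$. On such a neighborhood the principal part of ${\cal L}_s$ is uniformly elliptic (because $\sigma,b>0$ for $y>0$) and the zero-order coefficient is $+r\geq 0$, which is compatible with the strong minimum principle for nonnegative parabolic $W^{2,1}_p$ supersolutions. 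The principle then forces $u\equiv 0$ on the parabolic past of $P_0$ within that neighborhood.

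The propagation is extended to all of $\{\theta\leq\theta_0\}\cap\{y>0\}$ by chaining a sequence of overlapping uniformly parabolic subcylinders $\mathcal Q_\delta^N\setminus B_h$, slightly shifted in $s$ and $y$. In particular one can reach points with $s<\ln K$, where \eqref{eq:estu} yields $u\geq K-e^s>0$, contradicting $u\equiv 0$ there. This closes the argument.

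The main obstacle I anticipate is the chaining step across the strip $B_h$ around $s=\ln K$, where $u$ is only known to be continuous (and not necessarily in $W^{2,1}_p$). I expect to handle this by noting that under the contradiction hypothesis the strong minimum principle first applied on the right side $\{s>\ln K\}$ together with continuity of $u$ forces $u=0$ on slices arbitrarily close to the line $s=\ln K$; the strong minimum principle can then be relaunched on the left side $\{s<\ln K\}$ to propagate the zero into a region where the a priori bound $u\geq K-e^s>0$ gives the desired contradiction.
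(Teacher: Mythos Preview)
Your decomposition into $\{s<\ln K\}$ and $\{s\ge\ln K\}$ creates an artificial difficulty that the paper avoids entirely, and your proposed way of resolving it does not close.

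Concretely, your crossing step fails. Under the contradiction hypothesis you obtain, via the strong minimum principle on the right half $\{s>\ln K\}$, that $u\equiv 0$ on some parabolic past, and then by continuity $u(\ln K,y_0,\theta_0)=0$. But this point lies on the \emph{lateral boundary} of the left half $\{s<\ln K\}$, not in its interior, so you cannot ``relaunch'' the strong minimum principle there: on the interior of $\{s<\ln K\}$ you already know $u\ge K-e^s>0$, hence there is no interior zero to start from. Nor does this configuration yield a contradiction by itself, since the obstacle bound $u(s,y,\theta)\ge K-e^s$ tends to $0$ as $s\uparrow\ln K$, which is perfectly compatible with $u(\ln K,y_0,\theta_0)=0$. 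A Hopf-type argument at $s=\ln K$ does not help either: it would give $\partial_s u(\ln K^-,\cdot)\le -K$, which is consistent with the estimate $\partial_s u\ge -e^s$, and you do not have $C^1$ regularity across $s=\ln K$ at this stage to compare one-sided derivatives.

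The paper's proof is much more direct precisely because it does \emph{not} split in $s$. For each fixed $y_0>0$ one works on all of $\mathbb{R}\times(y_0,\infty)\times(0,T]$; there $u$ is a nonnegative supersolution of the uniformly parabolic operator $\partial_\theta-\mathcal{L}_s$, and the parabolic boundary data (the initial data $(K-e^s)^+$ and the trace on $\{y=y_0\}$) are nonnegative and \emph{not identically zero}, since $(K-e^s)^+>0$ on $\{s<\ln K\}$. The strong maximum principle then yields $u>0$ in the interior in one stroke; letting $y_0\downarrow 0$ finishes. In short, the positivity on $\{s<\ln K\}$ is fed into $\{s\ge\ln K\}$ through the diffusion, not through a chaining argument.

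Regarding your regularity concern near $s=\ln K$: the exclusion of $B_h$ in Lemma~\ref{lem:u} stems only from the kink of the initial datum at $\theta=0$. For $\theta>0$ one can apply a purely interior $W^{2,1}_p$ estimate to the penalized equation (the right-hand side $-\beta_\ep(\cdot)$ is uniformly bounded), so $u\in W^{2,1}_{p,loc}$ on $\{y>0,\ \theta>0\}$ across $s=\ln K$ as well. Hence the strong maximum principle applies on $\mathbb{R}\times(y_0,\infty)\times(0,T]$ without any chaining across a strip.
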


 \begin{proof} For any
 fixed $y_0>0$, we have
 \bee
 \left\{
 \begin{array}{ll}
 \p_\theta u-{\cal L}_su\geq
 0,\;\;\;& (s,y,\theta)\in\mathbb{R}\times(y_0,+\infty)\times(0,T],
 \vspace{2mm} \\
 u(s,y,0)=(K-e^s)^+\geq0,& (s,y)\in\mathbb{R}\times(y_0,+\infty),
 \vspace{2mm} \\
 u(s,y_0,\theta)\geq(K-e^s)^+\geq0,& (s,\theta)\in\mathbb{R}\times(0,T].
 \end{array}
 \right.
 \eee
 Applying strong maximum principle, we obtain
 \bee
 u(s,y,\theta)>0,\quad (s,y,\theta) \in
 \mathbb{R}\times(y_0,+\infty)\times(0,T].
 \eee
 Since $y_0$ is arbitrary, then we know
 \bee
 u(s,y,\theta)>0,\quad(s,y,\theta) \in
 \mathbb{R}\times\mathbb{R}^+\times(0,T].
 \eee
 \end{proof}

 In order to characterize the free boundary, we first define
 \bee
 &&\mathcal {C}[u]:=\{(s,y,\theta): u(s,y,\theta)=(K-e^s)^+\}{\rm(Coincidence\ set)},\\
 &&\mathcal {N}[u]:=\{(s,y,\theta): u(s,y,\theta)>(K-e^s)^+\}{\rm(Noncoincidence\ set)}.
 \eee
Thanks to the estimates \eqref{eq:estu}--\eqref{eq:estpyu} of the solution to \eqref{eq:viu}, problem \eqref{eq:viu} gives rise to a free boundary that can be expressed as a function of  $(y,\theta)$. The following three lemmas give the existence and properties of the free boundary.
 \begin{proposition}\label{prop:h} There exists
 $h(y,\theta): \mathbb{R}^+\times(0,T]\rightarrow\mathbb{R}$,
 such that
  \beq\label{eq:C}
 \mathcal {C}[u]=\{(s,y,\theta)\in \mathcal {Q}: s\leq
 h(y,\theta),\;y\in\mathbb{R}^+,\;\theta\in(0,T]\}.
 \eeq
 Moreover, for any fixed $y>0,\;h(y,\theta)$ is monotonic decreasing w.r.t.
 $\theta$; for any fixed $\theta\in(0,T],\;h(y,\theta)$ is monotonic decreasing w.r.t.
 $y$.
 \end{proposition}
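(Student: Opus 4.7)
The plan is to exploit the monotonicity estimates \eqref{eq:estptu}, \eqref{eq:estpsu}, \eqref{eq:estpyu} together with the strict positivity from Lemma~\ref{thm:estu} to read off the geometry of $\mathcal{C}[u]$ directly. I would define
\[
 h(y,\theta) := \sup\{s\in\mathbb{R} : u(s,y,\theta) = (K-e^s)^+\},
\]
with the convention $\sup\emptyset = -\infty$, and then verify (i)~the shape \eqref{eq:C}, (ii)~monotonicity in $\theta$, and (iii)~monotonicity in $y$, in that order.

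First I would rule out the region $\{s\geq \ln K\}$ from $\mathcal{C}[u]$: on this half-line $(K-e^s)^+ = 0$, while $u(s,y,\theta)>0$ by Lemma~\ref{thm:estu}, so $(s,y,\theta)\in\mathcal{N}[u]$ and in particular $h(y,\theta)<\ln K$ whenever the sup is finite. Next, for each fixed $(y,\theta)$, I would study $g(s):=u(s,y,\theta)-(K-e^s)$, which coincides with $u-(K-e^s)^+$ on the strip $s<\ln K$. The lower bound in \eqref{eq:estpsu} gives $\partial_s g = \partial_s u + e^s \geq 0$, so $g$ is nondecreasing in $s$. Combined with the continuity $u\in C(\overline{\mathcal Q}\setminus\{y=0\})$ from Lemma~\ref{lem:u}, the zero set of $g$ on $(-\infty,\ln K)$ is closed and downward-saturated, hence is precisely $(-\infty,h(y,\theta)]$ (or empty when $h=-\infty$), which yields the representation \eqref{eq:C}.

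For the monotonicity in $\theta$: if $\theta_1<\theta_2$ and $s_0\in\mathcal{C}[u]$ at $(y,\theta_2)$, then by \eqref{eq:estptu} and the obstacle inequality \eqref{eq:estu},
\[
 (K-e^{s_0})^+ \;\leq\; u(s_0,y,\theta_1) \;\leq\; u(s_0,y,\theta_2) \;=\; (K-e^{s_0})^+,
\]
so $s_0\in\mathcal{C}[u]$ at $(y,\theta_1)$ as well. Hence the coincidence slice shrinks as $\theta$ grows, and $h(y,\cdot)$ is nonincreasing. The identical argument with \eqref{eq:estpyu} in place of \eqref{eq:estptu} gives nonincrease of $h(\cdot,\theta)$ in $y$.

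There is no serious analytic obstacle here; the proposition is essentially a packaging of the one-sided derivative bounds of Lemma~\ref{lem:u} and the strict positivity of Lemma~\ref{thm:estu}. The only subtleties I would be careful about are the degenerate case $h(y,\theta)=-\infty$ (an empty coincidence slice, handled cleanly by the convention $\sup\emptyset=-\infty$) and the strict inequality $h(y,\theta)<\ln K$, which follows from the strict positivity of $u$ at $s=\ln K$.
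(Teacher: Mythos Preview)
Your proposal is correct and follows essentially the same route as the paper: both rule out $\{s\ge\ln K\}$ via Lemma~\ref{thm:estu}, use the lower bound $\partial_s u\ge -e^s$ from \eqref{eq:estpsu} to see that $s\mapsto u(s,y,\theta)-(K-e^s)$ is nondecreasing (so the coincidence slice is a left half-line), and then deduce the monotonicity of $h$ in $\theta$ and $y$ from \eqref{eq:estptu} and \eqref{eq:estpyu} exactly as you do. Your explicit handling of the possibility $h(y,\theta)=-\infty$ is a small refinement; the paper tacitly writes $\max$ in place of $\sup$ without addressing nonemptiness at this stage.
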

 \begin{proof}
 Since $(K-e^s)^+=0$ when $s\geq\ln K$, in view of Lemma
 \ref{thm:estu}, we have
 \bee
 \{s\geq\ln K\}\subset \mathcal {N}[u],\quad\mathcal {C}[u]\subset \{s<\ln K\}.
 \eee
Hence problem  (\ref{eq:viu}) is equivalent to the following problem
 \bee
 \left\{
 \begin{array}{ll}
 \min\Big\{\p_\theta u-{\cal L}_s u,\; u-(K-e^s)
 \Big\}=0,\;\;\;& (s,y,\theta)\in\mathcal {Q}:=\mathbb{R}\times \mathbb{R}^+\times(0,T],
 \vspace{2mm} \\
 u(s,y,0)=(K-e^s)^+,\;\;\;&s\in \mathbb{R},\;y\in\mathbb{R}^+,
 \vspace{2mm} \\
 u(s,0,\theta)=(K-e^s)^+,\;\;\;&s\in \mathbb{R},\;\theta\in(0,T].
 \end{array}
 \right.
 \eee
Together with (\ref{eq:estpsu}),
  we can define
 \bee
 h(y,\theta):=\max\{s\in\mathbb{R}: u(s,y,\theta)=(K-e^s)\},\quad(y,\theta)\in\mathbb{R}^+\times(0,T],
 \eee
 by the definition of $h(y,\theta)$, we know \eqref{eq:C} is true.

 Suppose
 $h(y,\theta_1)=s_1$, notice that $\p_\theta u(s,y,\theta)\geq0$,  then for any $\theta_2\leq \theta_1$,
 \bee
 0\leq u(s_1,y,\theta_2)-(K-e^{s_1})\leq
 u(s_1,y,\theta_1)-(K-e^{s_1})=0,
 \eee
 from which we infer that
 \bee
 u(s_1,y,\theta_2)=(K-e^{s_1}),\quad\theta_2\leq \theta_1.
 \eee
  By
 the definition of $h(y,\theta)$, we know $h(y,\theta_2)\geq s_1=h(y,\theta_1)$, thus
 $h(y,\cdot)$ is monotonic decreasing w.r.t. $\theta$.

 Similarly, the monotonicity of $h(y,\theta)$ w.r.t. $y$ can be
 deduced by virtue of $\p_yu(s,y,\theta)\geq0$ and the definition of $h(y,\theta)$.
 \end{proof}

 \begin{proposition}\label{prop:h1}  $h(y,\theta)$ is continuous  on $\mathbb{R}^+\times[0,T]$ with
 \bee
 h(y,0):=
 \lim\limits_{\theta\rightarrow0^+}h(y,\theta)=\ln K,\quad y>0.
 \eee
 \end{proposition}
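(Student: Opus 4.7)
The plan is to leverage the monotonicity of $h$ from Proposition~4.2 together with the identity
\[
\mathcal{L}_s(K-e^s) = -rK < 0,
\]
which captures why the continuation region must retreat to $\{s\geq \ln K\}$ at expiry. The same local barrier argument handles both the terminal value and the ruling-out of jumps for $\theta>0$.

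For the terminal value, monotonicity yields $L(y):=\lim_{\theta\to 0^+}h(y,\theta)$, and $L(y)\leq \ln K$ follows from Lemma~4.1 applied on $\{s\geq \ln K\}$. Suppose $L(y_0)<\ln K$ for some $y_0>0$. Choose $s_0\in(L(y_0),\ln K)$ and $\eta>0$ with $s_0+\eta<\ln K$; by monotonicity of $h$ in $y$ and openness of $\mathcal{N}[u]$, the cylinder $D:=(s_0-\eta,s_0+\eta)\times(y_0,y_0+\eta)\times(0,\theta_1)$ is contained in $\mathcal{N}[u]$ for some $\theta_1>0$ and lies inside the regularity domain of Lemma~3.3, so $u\in W^{2,1}_p(D)$. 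On $D$ the function $w:=u-(K-e^s)$ satisfies $w>0$, $w|_{\theta=0}\equiv 0$, and $(\partial_\theta-\mathcal{L}_s)w=-rK$ a.e. Because $w$ vanishes identically along the initial face, the Sobolev traces of $\nabla_{s,y}w$ and $D^2_{s,y}w$ at $\theta=0$ vanish; the equation then forces the trace of $\partial_\theta w$ at $\theta=0^+$ to equal $-rK<0$, while $w\geq 0 = w|_{\theta=0}$ forces this trace to be nonnegative. This contradiction gives $L(y_0)=\ln K$.

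Continuity at an interior point $(y_0,\theta_0)\in\mathbb{R}^+\times(0,T]$ follows by the same mechanism. Upward jumps (i.e.\ $h(y_0,\theta_0^-)>h(y_0,\theta_0)$) are ruled out by the upper semi-continuity of $h$ that comes from closedness of $\mathcal{C}[u]$. For a downward jump $h(y_0,\theta_0^+)<h(y_0,\theta_0)$, choose $s^*$ strictly between these two values: then $(s^*,y_0,\theta_0)\in\mathcal{C}[u]$ while $(s^*,y_0,\theta)\in\mathcal{N}[u]$ for $\theta\in(\theta_0,\theta_0+\eta)$, and an analogous cylinder based at $\theta=\theta_0$ (on which the analogous $w$ vanishes along the face $\theta=\theta_0$ by coincidence on that face combined with $\partial_\theta u\geq 0$) yields the same contradiction. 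Jumps in the $y$ variable are ruled out symmetrically using $\partial_y u\geq 0$.

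The principal technical obstacle is making the trace/barrier argument rigorous for $W^{2,1}_p$ solutions rather than classical ones. The cleanest route replaces the trace statement by the comparison function $\tilde w := w+\tfrac{rK}{2}\theta$, which satisfies $(\partial_\theta-\mathcal{L}_s)\tilde w \leq 0$ on $D$ for $\theta_1$ small and vanishes at $\theta=0$; combined with shrinking $D$ laterally so that $\tilde w\leq 0$ on the lateral parabolic boundary is enforced by continuity of $u$ at the free boundary, the weak parabolic maximum principle yields $\tilde w\leq 0$ in $D$, contradicting $w>0$.
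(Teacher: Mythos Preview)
Your treatment of the terminal value $h(y,0)=\ln K$ and of the $\theta$-continuity is essentially the paper's argument: both rest on the identity $\mathcal{L}_s(K-e^s)=-rK$ and the estimate $\partial_\theta u\ge 0$, which together force a contradiction if a rectangle in $\mathcal N[u]$ sits on a face where $u=K-e^s$.

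There is a genuine gap in the $y$-continuity. The claim that ``jumps in the $y$ variable are ruled out symmetrically using $\partial_y u\ge 0$'' does not go through, because the problem is parabolic in $\theta$, not in $y$. Concretely, suppose $h(y_0^+,\theta_0)<h(y_0,\theta_0)$ and set up a box in $\mathcal N[u]$ with face at $y=y_0$ where $u=K-e^s$. Along that face $\partial_\theta u=0$, $\partial_s u=-e^s$, $\partial_{ss}u=-e^s$, and $\partial_y u=0$ (from $\partial_y u\ge 0$ together with equality at $y_0$), hence $\partial_{sy}u=0$. Plugging into $\partial_\theta u=\mathcal L_s u$ gives only
\[
\tfrac12 b^2(y_0)\,\partial_{yy}u(s,y_0,\theta)=rK>0,
\]
which is perfectly compatible with $u\ge K-e^s$ and $\partial_y u\ge 0$; no contradiction arises from the ``symmetric'' computation. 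The paper instead differentiates and applies the Hopf lemma to $\partial_\theta u$: on $U_0\subset\mathcal N[u]$ one has $\partial_\theta(\partial_\theta u)-\mathcal L_s(\partial_\theta u)=0$, $\partial_\theta u\ge 0$, and both $\partial_\theta u$ and $\partial_{y}(\partial_\theta u)$ vanish on $\{y=y_0\}$; Hopf then forces either $\partial_{y\theta}u>0$ on that face or $\partial_\theta u\equiv 0$ in $U_0$, each a contradiction. This step is not symmetric to the $\theta$-argument and needs to be supplied.

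A secondary issue: your barrier $\tilde w=w+\tfrac{rK}{2}\theta$ does satisfy $(\partial_\theta-\mathcal L_s)\tilde w\le 0$ for small $\theta$, but you cannot enforce $\tilde w\le 0$ on the lateral parabolic boundary ``by continuity of $u$ at the free boundary'': the lateral faces of your cylinder $D$ lie in $\mathcal N[u]$, where $w>0$, so $\tilde w=w+\tfrac{rK}{2}\theta>0$ there regardless of how you shrink $D$. To run a clean maximum-principle version you would need a barrier that also dominates $w$ laterally (e.g.\ add a term growing in the spatial distance from the center), or---as the paper does---simply invoke $\partial_\theta u\ge 0$ pointwise and compute $\partial_\theta u=\mathcal L_s u=-rK<0$ on the bottom face directly.
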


 \begin{proof} We first prove $h(y,\theta)$ is continuous w.r.t.
 $\theta$. Suppose not. There exists $y_0>0,\;\theta_0>0$ such that
 $s_1:=h(y_0,\theta_0+)<h(y_0,\theta_0):=s_2$. Since $h(y_0,\theta_0+)=s_1$(see Fig. 3.), then
 \bee
u(s,y_0,\theta)>K-e^s,\quad s>s_1,\ \theta>\theta_0.
\eee
In fact,
$u\in W^{2,1}_p$ and the embedding theorem imply that $u$ is
uniformly continuous, thus there exists $\delta>0$, take $\mathcal
S_0=(s_1, s_2)\times(y_0-\delta, y_0)$ such that $U_0:=\mathcal
S_0\times(\theta_0, T]\subseteq \mathcal N[u]$, then
 \bee
 \p_\theta u-\mathcal {L}_su=0,\quad (s,y,\theta)\in U_0.
 \eee
 Moreover, in view of $h(y_0,\theta_0):=s_2$,
then
\bee
h(y,\theta_0)\geq s_2,\quad y_0-\delta<y\leq y_0,
\eee
hence
\bee
u(s,y,\theta_0)=K-e^s,\quad s\leq s_2,\  y_0-\delta<y\leq y_0. \eee In
particular, \bee
 u(s,y,\theta_0)=K-e^s,\quad(s,y)\in\overline{U_0}\cap\{\theta=\theta_0\},
\eee
 thus
 \bee
 \p_\theta u\big|_{\theta=\theta_0}&=&\mathcal
 {L}_su\big|_{\theta=\theta_0}=\frac{1}{2}\s^2(y)(-e^s)+\Big(r-\frac{1}{2}\s^2(y)\Big)(-e^s)-r(K-e^s)\\
 &=&-rK<0,
 \eee
 which comes to a contradiction with the fact that $\p_\theta u\geq0$.
 Hence $h(y,\theta)$ is continuous w.r.t. $\theta$.

 Since $h(y,\theta)$ is monotonic decreasing w.r.t. $\theta$, then we
 can define $h(y,0):=
 \lim\limits_{\theta\rightarrow0^+}h(y,\theta)$.
 In the same way we can prove $h(y,0)=\ln K$.

 Now we aim to prove the continuity of $h(y,\theta)$ w.r.t. $y$.
 If this is not true, there exists
 $\theta_0,\;y_0>0$ such that
$ s_1:= h(y_0+,\theta_0)<h(y_0,\theta_0):=
 s_2$.
 Since $h(y,\theta)$ is continuous w.r.t. $\theta$ and $u(s,y,\theta)\in C^{1,1,1/2}(\mathcal {Q})$,
take $U_0:=(\tilde{s}, \bar{s})\times(y_0,+\infty)\times(\tilde{\theta}, \bar{\theta})$, where
$(\tilde{s}, \bar{s})\times(\tilde{\theta}, \bar{\theta})\subseteq (h(y_0+,\theta), h(y_0,\theta))$(see Fig. 4.), then
$u(s,y,\theta)$ satisfies
 \bee
 &&\p_\theta u-\mathcal {L}_su=0,\quad (s,y,\theta)\in U_0,\\
 &&u(s,y_0,\theta)=K-e^s,\quad (s,\theta)\in\overline{U_0}\cap\{y=y_0\},\\
 &&\p_yu(s,y_0,\theta)=0,\quad (s,\theta)\in\overline{U_0}\cap\{y=y_0\}.
 \eee
 Thus
 \bee
 \p_\theta u(s,y_0,\theta)=\p_{y\theta}u(s,y_0,\theta)=0,\quad (s,\theta)\in\overline{U_0}\cap\{y=y_0\}.
 \eee
 Since $\p_\theta u\geq0$ and $\p_\theta(\p_\theta u)-\mathcal {L}_s(\p_\theta u)=0$ in
 $U_0$, by Hopf lemma we know
 \bee
 \p_{y\theta}u(s,y_0,\theta)>0,\quad (s,\theta)\in\overline{U_0}\cap\{y=y_0\},
 \eee
 or
 \bee
 \p_\theta u(s,y,\theta)\equiv0,\quad (s,y,\theta)\in U_0,
 \eee
 but both come to contradictions.

Together with the monotonicity of $h(y,\theta)$ w.r.t. $y$ and $\theta$, we conclude that $h(y,\theta)$ is continuous on $\mathbb{R}^+\times[0,T].$
\end{proof}

\begin{center}
 \begin{picture}(600,120)
 \thinlines
\put(50,20){\vector(1,0){170}}
 \put(50,20){\vector(0,1){100}}
 \qbezier(115,70)(83,90)(81,115)
 \qbezier(135,70)(140,40)(176,20)
\put(135,70){\circle*{2}}
 \multiput(50,70)(5,0){17}{\line(1,0){3}}
 \multiput(115,20)(0,5){10}{\line(0,1){3}}
 \multiput(135,20)(0,5){10}{\line(0,1){3}}
 \put(110,10){$s_1$}
 \put(130,10){$s_2$}
 \put(38,66){$\theta_0$}
 \put(223,18){$s$}
 \put(42,115){$\theta$}
 \put(35,-5){Fig. 3. Discontinuity of $h(y,\theta)$ w.r.t. $\theta$}
 \put(250,20){\vector(1,0){170}}
 \put(250,20){\vector(0,1){100}}
 \qbezier(385,20)(305,35)(280,100)
 \qbezier(385,20)(335,30)(315,100)
 \put(318,90){$h(y_0,\theta)$}
 \put(260,55){$h(y_0+,\theta)$}
 \put(423,18){$s$}
 \put(243,115){$\theta$}
 \multiput(290,20)(0,5){12}{\line(0,1){3}}
 \put(290,20){\circle*{2}}
 \put(286,12){$s_1$}
 \multiput(250,78)(5,0){15}{\line(1,0){3}}
 \put(250,78){\circle*{2}}
 \put(238,75){$\theta_0$}
 \multiput(322,20)(0,5){12}{\line(0,1){3}}
 \put(322,20){\circle*{2}}
 \put(320,12){$s_2$}
 \put(385,20){\circle*{2}}
 \put(372,10){$\ln K$}
 \put(240,-5){Fig. 4.  Discontinuity of $h(y,\theta)$ w.r.t. $y$}
 \end{picture}
 \end{center}

 \begin{proposition}\label{prop:h2} The free boundary $h(y,\theta)$ satisfies
 \bee
 h_0(y)\leq h(y,\theta)< \ln K,\quad y>0,\;\theta\in(0,T],
 \eee
 where $h_0(y)$ is the free boundary curve of
 \beq \label{eq:viuinfty}
 \min\{-{\cal L}_su_\infty(s,y),\;u_\infty(s,y)-(K-e^s)^+\}=0
 ,\quad (s,y)\in\mathbb{R}\times\mathbb{R}^+.
 \eeq
 \end{proposition}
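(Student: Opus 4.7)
The plan breaks into two pieces corresponding to the two inequalities.

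For the upper bound $h(y,\theta)<\ln K$, this is essentially immediate from Lemma~\ref{thm:estu}: since $u(s,y,\theta)>0$ on $\mathcal{Q}$ and $(K-e^s)^+=0$ for all $s\geq\ln K$, the half-space $\{s\geq\ln K\}$ lies entirely in $\mathcal{N}[u]$, and by the characterization~\eqref{eq:C} this forces $h(y,\theta)<\ln K$ strictly.

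For the lower bound $h_0(y)\leq h(y,\theta)$, the strategy is to show $u(s,y,\theta)\leq u_\infty(s,y)$ on $\mathcal{Q}$ via comparison with the stationary solution. Since $u_\infty$ is independent of $\theta$ and satisfies $-\mathcal{L}_s u_\infty\geq 0$ together with $u_\infty\geq(K-e^s)^+$, it serves as a supersolution of the parabolic variational inequality~\eqref{eq:viu}: both $\partial_\theta u_\infty-\mathcal{L}_s u_\infty\geq 0$ and the obstacle constraint hold. The initial datum $u_\infty(s,y)\geq(K-e^s)^+=u(s,y,0)$ and the boundary datum $u_\infty(s,0)\geq(K-e^s)^+=u(s,0,\theta)$ match or dominate those of $u$. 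A comparison principle then yields $u\leq u_\infty$ throughout $\mathcal{Q}$.

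Once $u\leq u_\infty$ is in hand, the conclusion follows quickly: fix $y>0$ and take $s\leq h_0(y)$, so that $u_\infty(s,y)=(K-e^s)^+$. Sandwiching
\[
(K-e^s)^+\leq u(s,y,\theta)\leq u_\infty(s,y)=(K-e^s)^+
\]
forces $u(s,y,\theta)=(K-e^s)^+$, hence $(s,y,\theta)\in\mathcal{C}[u]$ and $s\leq h(y,\theta)$ by~\eqref{eq:C}; this is exactly $h_0(y)\leq h(y,\theta)$.

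The main obstacle is the comparison step itself: the coefficients degenerate on $\{y=0\}$, the domain is unbounded, and $u_\infty$ is only a strong (not classical) solution, so the classical maximum principle cannot be invoked directly. I expect this to be handled at the level of the penalty approximations $u_\ep$ from the proof of Lemma~\ref{lem:u}, together with an analogous penalty/truncation approximation of $u_\infty$ on $\mathcal{Q}^N$: one applies the standard parabolic comparison principle to the smooth approximating solutions, then passes to the limit $\ep\to 0^+$ and $N\to\infty$. The barrier $\pi_{\ep_0}(K-e^s)+Ay^\alpha$ built in the proof of Lemma~\ref{lem:u} should again control behavior uniformly in $\ep$ near the degenerate boundary $y=0$, so that the limit comparison survives the negative Fichera condition.
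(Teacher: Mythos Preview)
Your lower-bound argument ($h_0(y)\leq h(y,\theta)$ via the comparison $u\leq u_\infty$ and then sandwiching on the stationary coincidence set) is exactly the paper's approach; the paper simply invokes ``monotonicity of the solution of the variational inequality with respect to initial value'' without spelling out the penalty/limit justification you sketch.

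Your upper-bound argument, however, is genuinely different from the paper's and considerably shorter. You observe that Lemma~\ref{thm:estu} gives $u>0$ on $\mathcal{Q}$, so at $s=\ln K$ one has $u(\ln K,y,\theta)>0=(K-e^{\ln K})^+$, placing $(\ln K,y,\theta)$ in $\mathcal{N}[u]$; the characterization~\eqref{eq:C} then forces $h(y,\theta)<\ln K$ immediately. The paper instead argues by contradiction: assuming $h(y_0,\theta_0)=\ln K$, it uses the monotonicity of $h$ to get $u=K-e^s$ on a box to the left of $s=\ln K$, deduces $\partial_\theta u=\partial_{s\theta}u=0$ along $\{s=\ln K\}$, and then applies the Hopf lemma to $\partial_\theta u$ (which solves the linear equation in $\{s>\ln K\}$) to derive a contradiction. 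Your route avoids the Hopf lemma and the regularity of $\partial_\theta u$ entirely, at the cost of relying on the strong maximum principle already packaged into Lemma~\ref{thm:estu}; the paper's route is self-contained at the level of the free boundary but is heavier machinery for the same conclusion.
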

 \begin{proof} Since $\p_\theta u_\infty(s,y)=0$, then we can
 rewrite (\ref{eq:viuinfty}) as
 \begin{eqnarray*}
 \left\{
 \begin{array}{ll}
 \min\Big\{\p_\theta u_\infty-{\cal L}_s u_\infty,\; u_\infty-(K-e^s)^+
 \Big\}=0,\quad (s,y,\theta)\in\mathcal {Q},
 \vspace{2mm} \\
 u_\infty(s,y)|_{\theta=0}=u_\infty(s,y)\geq(K-e^s)^+=u(s,y,0).
 \end{array}
 \right.
 \end{eqnarray*}
 Applying the monotonicity of solution of variational inequality
 w.r.t. initial value, we have
 \bee
 u_\infty(s,y)\geq
 u(s,y,\theta),\quad\theta\geq0.
 \eee
 By the definitions of $h_0(y)$ and $h(y,\theta)$, we know
 \bee
 h_0(y)\leq h(y,\theta).
 \eee

Now we will prove $h(y,\theta)<\ln K$. Suppose not. There exists $y_0>0$, $\theta_0>0$ such that $h(y_0,\theta_0)=\ln K$. Then by the monotonicity of $h(y,\theta)$ and the fact that $h(y,\theta)\leq \ln K$, we have
\bee
u(s,y,\theta)=K-e^s,\quad (s,y,\theta)\in[0,\ln K]\times(0,y_0]\times(0,\theta_0].
\eee
Thus
\bee
\p_\theta u(\ln K,y,\theta)=\p_{s\theta} u(\ln K,y,\theta)=0,\quad (y,\theta)\in (0,y_0]\times(0,\theta_0].
\eee
Since $\p_\theta u(s,y,\theta)\geq0$ and $\p_\theta(\p_\theta u)-\mathcal{L}_s(\p_\theta u)=0, s>\ln K$. By Hopf lemma \cite{Fri64} we know
\bee
\p_{s\theta}u(\ln K,y,\theta)>0,\quad (y,\theta)\in (0,y_0]\times(0,\theta_0],
\eee
or
\bee
\p_\theta u(s,y,\theta)=0, \quad (s,y,\theta)\in(\ln K,+\infty)\times(0,y_0]\times(0,\theta_0],
\eee
but both come to contradictions.
 \end{proof}

\begin{remark}

The numerical result of $h_0(y)$ under Heston model is given in \cite{ZC11},
and by similar methods, under the assumptions (A1)--(A2), we can also obtain the existence of $h_0(y)$.
\end{remark}

\section{Characterization of the value function}\label{sec:uniqueness}

Now, we are ready to present the characterization of the value function
of \eqref{eq:defV} to the variational inequality \eqref{eq:viV} with
boundary condition \eqref{eq:bc}. To proceed, we present the solvability and
regularity results on the variational inequality \eqref{eq:viV} with
boundary condition \eqref{eq:bc}  via the counterpart on the transformed problem obtained in the above.

\begin{theorem}\label{th:V} Suppose a bounded function $v(x,y,t)\in W^{2,1}_{p,loc}(Q)\cap C(\bar{Q})$ satisfies variational inequality \eqref{eq:viV} with boundary condition \eqref{eq:bc}, the following assertions hold.
\begin{enumerate}
\item $v(x,y,t)$ satisfies the following estimates
\be
&&(K-x)^+\leq v(x,y,t)\leq K+1,\label{A0}\\
&&-1\leq \p_xv(x,y,t)\leq 0,\label{A1}\\
&&\p_yv(x,y,t)\geq0.\label{A2}
\ee
\item There exists a continuous function $g(y,t):\mathbb{R}^+\times[0,T)\rightarrow\mathbb{R}^+$, such that for any fixed $y>0$, $g(y,t)$ is monotonic increasing w.r.t. $t$; for any fixed $t\in[0,T)$, $g(y,t)$ is monotonic decreasing w.r.t. $y$ with
\bee
g(y,t)<g(y,T)=K,\quad y>0,\ t\in[0,T),
\eee
and
\begin{eqnarray*}
 \left\{
 \begin{array}{ll}
 -\p_t v(x,y,t)-{\cal L}_x v(x,y,t)=0,&\quad x>g(y,t),
 \vspace{2mm} \\
 v(x,y,t)=(K-x)^+,& 0\leq x\leq g(y,t).
 \end{array}
 \right.
 \end{eqnarray*}
\item Especially, $v(x,y,t)\in C^{2,1}$ when $x>g(y,t)$.
\end{enumerate}
\end{theorem}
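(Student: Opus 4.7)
The plan is to reduce everything to the transformed variational inequality \eqref{eq:viu} analyzed in Sections \ref{sec:strongsol}--\ref{sec:freeboundary}. Given any bounded $v\in W^{2,1}_{p,\mathrm{loc}}(Q)\cap C(\bar Q)$ satisfying \eqref{eq:viV}--\eqref{eq:bc}, I would set $s=\ln x$, $\theta=T-t$, and $u(s,y,\theta):=v(e^s,y,T-\theta)$. By the computation of ${\cal L}_x V$ carried out at the start of Section \ref{sec:strongsol}, $u$ is a bounded $W^{2,1}_{p,\mathrm{loc}}(\mathcal Q)\cap C(\bar{\mathcal Q})$ solution of \eqref{eq:viu}, with the initial datum $u(s,y,0)=(K-e^s)^+$ and the boundary datum $u(s,0,\theta)=(K-e^s)^+$ inherited from \eqref{eq:bc}.

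For assertion 1, I would transfer the estimates \eqref{eq:estu}--\eqref{eq:estpyu} from $u$ back to $v$ using $\p_xv=e^{-s}\p_su$ and $\p_yv=\p_yu$, which converts $-e^s\le\p_su\le 0$ into $-1\le\p_xv\le 0$ and $\p_yu\ge 0$ into $\p_yv\ge 0$, and $(K-e^s)^+\le u\le K+1$ into \eqref{A0}. The estimates on $u$ themselves follow by exactly the arguments used in Lemma \ref{lem:u}: the upper bound because $K+1$ is a supersolution dominating the obstacle and the data; the derivative signs by differentiating the VI on the noncoincidence set and invoking the maximum principle, with the assumption $\s'(y)\ge 0$ entering only at the step for $\p_yu\ge 0$. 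On the coincidence set, where $u=(K-e^s)^+$, all the required signs can be checked by direct computation.

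For assertion 2, I would invoke Propositions \ref{prop:h}, \ref{prop:h1}, and \ref{prop:h2} applied to $u$, producing a continuous free boundary $h:\mathbb R^+\times[0,T]\to\mathbb R$ monotonic decreasing in each of $y,\theta$, with $h(y,0)=\ln K$ and $h(y,\theta)<\ln K$ for $\theta>0$. Define $g(y,t):=e^{h(y,T-t)}$; then the coincidence set $\{s\le h(y,\theta)\}$ for $u$ transforms into $\{x\le g(y,t)\}$ for $v$, on which $v=(K-x)^+$, while on its complement $-\p_tv-{\cal L}_xv=0$ holds. Since $t=T-\theta$, the monotonicity of $h$ in $\theta$ reverses to give $g$ increasing in $t$, while monotonicity in $y$ is preserved; the endpoints $g(y,T)=e^{\ln K}=K$ and $g(y,t)<K$ for $t<T$ come directly from the corresponding statements about $h$. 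Continuity of $g$ is immediate from continuity of $h$ and the exponential map.

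Assertion 3 is a standard consequence of interior parabolic Schauder estimates applied to $\p_\theta u-{\cal L}_s u=0$ on the open noncoincidence set: the coefficients are smooth with $y>0$ and the operator is uniformly parabolic on compact subsets, so $u\in C^{2,1}$ there, and pulling back gives $v\in C^{2,1}$ on $\{x>g(y,t)\}$. The main obstacle I anticipate is making the derivative-sign arguments of step 1 rigorous for a general strong-sense $v$: in Lemma \ref{lem:u} those bounds were obtained on the smooth penalized approximations $u_\ep$, not on an abstract $W^{2,1}_{p,\mathrm{loc}}$ solution. The cleanest route is to first establish a comparison/uniqueness result for strong solutions of the VI \eqref{eq:viu} with matching initial and boundary data, which forces $v$ to coincide with the solution constructed in Section \ref{sec:strongsol} and makes the estimates automatic; failing that, the maximum-principle arguments for the derivative equations must be redone for $v$ via mollification, with extra care near the free boundary and the degenerate boundary $y=0$.
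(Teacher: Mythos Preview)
Your plan matches the paper's proof exactly: transform to $(s,\theta)$, pull the estimates \eqref{eq:estu}, \eqref{eq:estpsu}, \eqref{eq:estpyu} back via $\p_xv=e^{-s}\p_su$, set $g(y,t)=e^{h(y,T-t)}$ and invoke Propositions \ref{prop:h}--\ref{prop:h2}, then apply interior $C^{2,1}$ estimates on the noncoincidence set. You are in fact more scrupulous than the paper on one point: the bounds of Lemma \ref{lem:u} were obtained only for the \emph{constructed} solution, whereas Theorem \ref{th:V} is stated for an arbitrary strong solution of \eqref{eq:viV}--\eqref{eq:bc}; the paper's proof simply quotes those bounds without comment, so your proposed cure---either identify $v$ with the constructed solution via a comparison/uniqueness argument, or reprove the monotonicities directly for a generic strong solution---addresses a gap the paper leaves implicit.
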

\begin{proof}
By the transformations $s=\ln x, \theta=T-t, u(s,y,\theta)=v(x,y,t)$, noting $x\p_xv(x,y,t)=\p_su(s,y,\theta)$ and using estimates \eqref{eq:estu}, \eqref{eq:estpsu} and \eqref{eq:estpyu}, we can obtain \eqref{A0}--\eqref{A2}.

Let $g(y,t)=\exp\{h(y,\theta)\}$, by proposition \ref{prop:h}--\ref{prop:h1}, we can conclude 2.

For  any $x_0>g(y_0,t_0)$, then $v(x_0,y_0,t_0)>(K-x_0)^+$, since $v(x,y,t)$ is uniformly continuous, there exists a disk $B_\de(x_0,y_0,t_0)$ with center $(x_0,y_0,t_0)$ and radius $\de$ such that
\bee
v(x,y,t)>(K-x)^+,\quad (x,y,t)\in B_\de(x_0,y_0,t_0).
\eee
Applying $C^{2,1}$ interior estimate to
\bee
\p_tv(x,y,t)+\mathcal{L}_xv(x,y,t)=0,\quad (x,y,t)\in B_\de(x_0,y_0,t_0),
\eee
to obtain $v(x,y,t)\in C^{2,1}(B_\de(x_0,y_0,t_0))$, hence $v(x,y,t)\in C^{2,1}$ when $x>g(y,t)$.
\end{proof}

Finally, the uniqueness result is given in this below through the
arguments of verification theorem.
 \begin{theorem} Suppose there exists $v(x,y,t)\in W^{2,1}_{p,loc}(Q)$ to the problem \eqref{eq:viV} with boundary condition \eqref{eq:bc}, then $v(x,y,t)\geq V(x,y,t)$. If, in addition, there exists the  region
 $\mathcal{N}[v]:=\{(x,y,t)\in Q, v(x,y,t)>(K-x)^+\}$ satisfies
\bee
(\p_tv+\mathcal{L}_xv)(X_s,Y_s,s)=0,\quad s\in[t,\tau^*],
\eee
for the stopping time
$\tau^*:=\inf\{s>t: (X_s,Y_s,s)\notin\mathcal{N}[v]\}\wedge T$. Then the variational inequality \eqref{eq:viV} with boundary condition \eqref{eq:bc} admits a unique solution in $W^{2,1}_{p,loc}(Q)$ and $v(x,y,t)=V(x,y,t)$.
 \end{theorem}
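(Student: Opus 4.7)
The plan is to prove this as a verification theorem in the spirit of \cite{Kry80}. The core tool is Krylov's generalized It\^o formula applied to $e^{-r(s-t)}v(X_s,Y_s,s)$, which is valid for $v \in W^{2,1}_{p,loc}$ with weak spatial derivatives replacing classical ones provided the diffusion is locally non-degenerate. Theorem~\ref{th:V} supplies exactly the regularity needed: $v$ is $W^{2,1}_{p,loc}$ globally and $C^{2,1}$ on the continuation region $\{x>g(y,t)\}$, so the formula applies on any compact subdomain bounded away from the degenerate boundary $\{y=0\}$.

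For the first assertion $v\geq V$, fix an arbitrary $\tau\in\mathcal{T}_{t,T}$ and introduce the localizing sequence
\[
\tau_n := \tau \wedge \inf\{s>t : Y_s \leq 1/n\} \wedge \inf\{s>t : X_s \notin (1/n,n)\},
\]
under which $(X,Y)$ stays in a compact subset of $\mathbb{R}^+\times\mathbb{R}^+$ where $\mathcal{L}_x$ is uniformly elliptic. Applying Krylov's generalized It\^o formula to $e^{-r(s-t)}v(X_s,Y_s,s)$ on $[t,\tau_n]$ and taking expectation annihilates the local-martingale part, yielding
\[
v(x,y,t) - \mathbb{E}_{x,y,t}\bigl[e^{-r(\tau_n-t)}v(X_{\tau_n},Y_{\tau_n},\tau_n)\bigr] = -\mathbb{E}_{x,y,t}\int_t^{\tau_n} e^{-r(s-t)}(\p_t v+\mathcal{L}_x v)(X_s,Y_s,s)\,ds \geq 0,
\]
where the inequality uses $-\p_t v-\mathcal{L}_x v\geq 0$ from \eqref{eq:viV}. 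Sending $n\to\infty$, the uniform bound $v\leq K+1$ from \eqref{A0} plus dominated convergence, combined with $v(x,0,s)=(K-x)^+$ from \eqref{eq:bc} on the event $\{\tau\geq\nu\}$, give $v(x,y,t)\geq\mathbb{E}[e^{-r(\tau-t)}v(X_\tau,Y_\tau,\tau)]\geq\mathbb{E}[e^{-r(\tau-t)}(K-X_\tau)^+]$. Taking $\sup_\tau$ establishes $v\geq V$.

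For the equality, repeat the argument with $\tau=\tau^*$. The additional hypothesis makes the drift integral vanish identically on $[t,\tau^*]$, so the previous inequality becomes the identity
\[
v(x,y,t) = \mathbb{E}_{x,y,t}\bigl[e^{-r(\tau^*-t)}v(X_{\tau^*},Y_{\tau^*},\tau^*)\bigr].
\]
At $\tau^*$ the trajectory has either reached the free boundary $\{x=g(Y_{\tau^*},\tau^*)\}$ (where $v=(K-x)^+$ by Theorem~\ref{th:V}.2), reached terminal time $T$ (where $v=(K-x)^+$ by \eqref{eq:viV}), or reached the degenerate boundary $\{Y=0\}$ (where $v=(K-x)^+$ by \eqref{eq:bc}); in all three cases $v(X_{\tau^*},Y_{\tau^*},\tau^*)=(K-X_{\tau^*})^+$. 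Hence $v(x,y,t)=\mathbb{E}[e^{-r(\tau^*-t)}(K-X_{\tau^*})^+]\leq V(x,y,t)$, and combined with the first part this gives $v=V$; since $V$ is independent of the particular solution $v$, uniqueness in $W^{2,1}_{p,loc}(Q)$ follows.

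The principal obstacle will be rigorously justifying Krylov's It\^o formula in this degenerate and unbounded setting, since Krylov's result requires local uniform ellipticity and integrability of the $W^{2,1}_p$ second derivatives along trajectories, neither of which holds across $\{y=0\}$ where $b^2$ vanishes nor across $\{x=g(y,t)\}$ where $v$ is merely $W^{2,1}_p$. The two-parameter localization $\tau_n$ above handles both issues simultaneously, by keeping $(X,Y)$ in a region where Theorem~\ref{th:V} gives $v\in W^{2,1}_p$ and $\mathcal{L}_x$ is uniformly parabolic. Passing to the limit $n\to\infty$ then relies on moment estimates for $(X,Y)$ under (A1)--(A2) and on the continuity of $v$ up to the degenerate boundary established in Lemma~\ref{lem:u}, which together justify dominated convergence on the events $\{\tau_n\to\tau\}$ and $\{\tau_n\to\nu\}$.
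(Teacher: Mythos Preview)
Your proposal is correct and follows the same verification-theorem architecture as the paper: localize, apply (generalized) It\^o, use $-\p_tv-\mathcal{L}_xv\ge0$ for the supersolution direction, and use the hypothesis that the drift vanishes on $\mathcal{N}[v]$ together with $v=(K-x)^+$ on $\partial\mathcal{N}[v]$ for the reverse direction.  The first half ($v\ge V$) is essentially identical to the paper's argument; only a small wording issue: after passing to the limit your localization gives $v(x,y,t)\ge\mathbb{E}[e^{-r(\tau\wedge\nu-t)}v(X_{\tau\wedge\nu},Y_{\tau\wedge\nu},\tau\wedge\nu)]$, not the inequality with $\tau$ in place of $\tau\wedge\nu$; the step down to $\mathbb{E}[e^{-r(\tau-t)}(K-X_\tau)^+]$ on $\{\tau\ge\nu\}$ uses \eqref{eq:nu} and \eqref{eq:bc} exactly as you indicate, but the intermediate displayed inequality should be corrected.

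The genuine difference is in the second half.  The paper localizes \emph{only from above}, via $\tilde\tau^\beta_x=\inf\{s:X_s\ge\beta\}\wedge T$ and $\tilde\tau^\beta_y=\inf\{s:Y_s\ge\beta\}\wedge T$, so that $\tilde\tau^\beta_x\wedge\tilde\tau^\beta_y\to T$ rather than $\nu\wedge T$.  This produces a residual term $\mathbb{E}[e^{-r(\tilde\tau^\beta-t)}v(\cdot)\,I_{\{\tau^*>\tilde\tau^\beta\}}]$ that must be shown to vanish, and the paper handles it by proving $\limsup_\beta\beta\,\mathbb{P}\{\tilde\tau^\beta_x<T\}\le x$ (and similarly for $Y$) via a stopped-martingale argument, combined with the sublinear bound $v\le K+1=o(\beta)$.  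Your route instead keeps the two-sided localization from part one, observes (implicitly) that $\tau^*\le\nu$ because $\{y=0\}\cap Q=\emptyset$ forces $(X_\nu,0,\nu)\notin\mathcal{N}[v]$, hence $\tau^*\wedge\tau_n\to\tau^*$, and concludes by dominated convergence using only $|v|\le K+1$.  This is shorter and avoids the martingale estimate entirely; the paper's estimate, on the other hand, is what would be needed for the extension to payoffs of strictly sublinear growth flagged in Section~\ref{sec:conclusion}.  You should make the inequality $\tau^*\le\nu$ explicit, since without it your limit $\tau_n\to\tau^*$ does not follow.
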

 \begin{proof}
 Let $\tau^\beta_x:=\inf\{s>t: X_s\leq \frac{1}{\beta}\ \rm{or}\  X_s\geq\beta\}\wedge T$ be the first hitting time of the process $X_s$ to the upper bound $\beta$ or the lower bound $\frac{1}{\beta}$ or terminal time $T$, $\tau^\beta_y:=\inf\{s>t:Y_s\leq \frac{1}{\beta}\ \rm{or}\   Y_s\geq\beta\}\wedge T$ be the first hitting time of the process $Y_s$ to the upper bound $\beta$ or the lower bound $\frac{1}{\beta}$ or terminal time $T$. Let $\tau\in\mathcal{T}_{t,\tau^\beta_x\wedge\tau^\beta_y}$, by the general It\^o's formula \cite{Kry80},
\be\label{eq:veri}
e^{-r(\tau-t)}v(X_\tau,Y_\tau,\tau)&=&v(x,y,t)+\int_t^\tau e^{-r(s-t)}(\p_tv+\mathcal{L}_xv)(X_s,Y_s,s)ds\nonumber\\
&&+\int_t^\tau e^{-r(s-t)}[\s(Y_s)X_s\p_xvdW_s+b(Y_s)\p_yvdB_s].
\ee
Since $v(x,y,t)$ is bounded and the It\^o integrals in \eqref{eq:veri} are local martingales, hence they are martingales. Moreover, by Theorem \ref{th:V}, we know $v(x,y,t)$ satisfies $\p_tv+\mathcal{L}_xv\leq0$, $v(X_\tau,Y_\tau,\tau)\geq (K-X_\tau)^+$, hence
\bee
v(x,y,t)\geq \mathbb{E}_{x,y,t}[e^{-r(\tau-t)}(K-X_\tau)^+],\quad \tau\in \mathcal{T}_{t,\tau^\beta_x\wedge\tau^\beta_y}.
\eee
Since $X_s$ is a positive, non-explosive local martingale, then
\beq\label{eq:taux}
\lim\limits_{\beta\rightarrow \infty}\tau^\beta_x=T,\quad a.s.-\mathbb{P}.
\eeq
Since $Y_s$ is non-negative, non-explosive local martingale, then
\beq\label{eq:tauy}
\lim\limits_{\beta\rightarrow \infty}\tau^\beta_y=\nu\wedge T,\quad a.s.-\mathbb{P},
\eeq
where $\nu$ is the first hitting time of $Y_s$ to the boundary $y=0$.
Hence the arbitrariness of $\tau\in\mathcal{T}_{t,\tau^\beta_x\wedge\tau^\beta_y}$ and the above two limits imply that
\beq\label{eq:EnuT}
v(x,y,t)\geq \sup\limits_{\tau\in \mathcal{T}_{t,\nu\wedge T}}\mathbb{E}_{x,y,t}[e^{-r(\tau-t)}(K-X_\tau)^+].
\eeq
In view of \eqref{eq:nu}, when $\nu<T$,
\bee
\mathbb{E}_{x,y,t}[e^{-r(\nu-t)}(K-X_\nu)^+]\geq\mathbb{E}_{x,y,t}[e^{-r(\tau-t)}(K-X_\tau)^+],\quad\tau\in\mathcal{T}_{\nu,T}.
\eee
Together with \eqref{eq:EnuT}, we have
\bee
v(x,y,t)\geq \sup\limits_{ \tau\in \mathcal{T}_{t,T}}\mathbb{E}_{x,y,t}[e^{-r(\tau-t)}(K-X_\tau)^+]=V(x,y,t).
\eee

On the other hand, define $\tilde\tau^\beta_x:=\inf\{s>t:  X_s\geq\beta\}\wedge T$ be the first hitting time of the process $X_s$ to the upper bound $\beta$ or terminal time $T$, $\tilde\tau^\beta_y:=\inf\{s>t: Y_s\geq\beta\}\wedge T$ be the first hitting time of the process $Y_s$ to the upper bound $\beta$ or terminal time $T$. By \eqref{eq:taux} and \eqref{eq:tauy} we know
\bee
\lim\limits_{\beta\rightarrow \infty}\tilde\tau^\beta_x\wedge\tilde\tau^\beta_y=T,\quad a.s.-\mathbb{P}.
\eee
Together with Monotone Convergence Theorem, we  obtain
 \be
 && \lim_{\beta\to \infty} \mathbb{E}_{x,t} \Big[X_T
      I_{\{\tilde\tau^\beta_x= T\}} \Big]  = \mathbb{E}_{x,t} \Big[
      \lim_{\beta \to \infty} X_T
      I_{\{\tilde\tau^\beta_x = T\}} \Big] = \mathbb{E}_{x,t} \big[
      X_T\big],\label{eq:MCTX}\\
 && \lim_{\beta\to \infty} \mathbb{E}_{y,t} \Big[Y_T
      I_{\{\tilde\tau^\beta_y= T\}} \Big]  = \mathbb{E}_{y,t} \Big[
      \lim_{\beta \to \infty} Y_T
      I_{\{\tilde\tau^\beta_y = T\}} \Big] = \mathbb{E}_{y,t} \big[
      Y_T\big].\label{eq:MCTY}
      \ee
Moreover, by the definitions of $\tilde\tau^\beta_x,  \tilde\tau^\beta_y$, we can have
 \bee
\mathbb E_{x,t}\big[X_{\tilde\tau^\beta_x}\big] &=& \mathbb E_{x,t}
\Big[X_{\tilde\tau^\beta_x} I_{\{\tilde\tau^\beta_x<T\}}\Big] + \mathbb
E_{x,t}\Big[X_T I_{\{\tilde\tau^\beta_x=T\}}\Big]\\& = &\beta
\mathbb{P}\{\tilde\tau^{\beta}_x < T\}
  + \mathbb{E}_{x,t} \Big[X_T I_{\{\tilde\tau^\beta_x=T\}}\Big].
 \eee
Forcing the limit $\beta\to \infty$, due to \eqref{eq:MCTX},
  $$\lim_{\beta \to \infty} \mathbb{E}_{x,t}[X_{\tilde\tau^\beta_x}] =
  \lim_{\beta \to \infty} \beta \mathbb{P}\{\tilde\tau^{\beta}_x < T\}
  + \mathbb{E}_{x,t} [X_T].$$
  For all $\beta>x$, since $\{X_{\tilde\tau^\beta_x\wedge s}:s>t\}$ is a
  bounded local martingale, hence it is a martingale. So,
  $\mathbb{E}_{x,t}[X_{\tilde\tau^\beta_x}] = x$ for all $\beta>x$. Rearranging
  the above equality,   we have
  \beq\label{eq:betaX}
  \lim_{\beta \to \infty} \beta \mathbb{P}\{\tilde\tau^{\beta}_x < T\}
    = x -  \mathbb{E}_{x,t} [X_T]\leq x.
  \eeq
Similarly,
 \beq\label{eq:betaY}
  \lim_{\beta \to \infty} \beta \mathbb{P}\{\tilde\tau^{\beta}_y < T\}
    = y -  \mathbb{E}_{y,t} [Y_T]\leq y.
  \eeq

By Theorem \ref{th:V} we know $\mathcal{N}[v]=\{(x,y,t)\in Q: x>g(y,t)\}$, noting that $v(x,y,t)\in C^{2,1}$ and $\p_tv+\mathcal{L}_xv=0$ in $\mathcal{N}[v]$, using the classical It\^o's formula \cite{Ok03} in $[t,\tau^*\wedge\tilde\tau^\beta_x\wedge\tilde\tau^\beta_y]$, we have
\bee
v(x,y,t)&=&\mathbb{E}_{x,y,t}\Big[e^{-r(\tau^*\wedge\tilde\tau^\beta_x\wedge\tilde\tau^\beta_y- t)}v(X_{\tau^*\wedge\tilde\tau^\beta_x\wedge\tilde\tau^\beta_y},
Y_{\tau^*\wedge\tilde\tau^\beta_x\wedge\tilde\tau^\beta_y},\tau^*\wedge\tilde\tau^\beta_x\wedge\tilde\tau^\beta_y)\Big]\\
&=&\mathbb{E}_{x,y,t}\Big[e^{-r(\tau^*-t)}v(X_{\tau^*},Y_{\tau^*},\tau^*)I_{\{\tau^*\leq\tilde\tau^\beta_x\wedge\tilde\tau^\beta_y\}}\Big]\\
&&+\mathbb{E}_{x,y,t}\Big[e^{-r(\tilde\tau^\beta_x\wedge\tilde\tau^\beta_y-t)}v(X_{\tilde\tau^\beta_x\wedge\tilde\tau^\beta_y},Y_{\tilde\tau^\beta_x\wedge\tilde\tau^\beta_y},\tilde\tau^\beta_x\wedge\tilde\tau^\beta_y)I_{\{\tau^*>\tilde\tau^\beta_x\wedge\tilde\tau^\beta_y\}}\Big].
\eee
Forcing $\beta\rightarrow+\infty$, since $\lim\limits_{\beta\rightarrow \infty}\tilde\tau^\beta_x\wedge\tilde\tau^\beta_y=T$,
\bee
v(x,y,t)&=&\mathbb{E}_{x,y,t}[e^{-r(\tau^*-t)}(K-X_{\tau^*})^+]\\
&&+\lim\limits_{\beta\rightarrow \infty}\mathbb{E}_{x,y,t}\Big[e^{-r(\tilde\tau^\beta_x\wedge\tilde\tau^\beta_y-t)}v(X_{\tilde\tau^\beta_x\wedge\tilde\tau^\beta_y},Y_{\tilde\tau^\beta_x\wedge\tilde\tau^\beta_y},\tilde\tau^\beta_x\wedge\tilde\tau^\beta_y)I_{\{\tau^*>\tilde\tau^\beta_x\wedge\tilde\tau^\beta_y\}}\Big].
\eee
In the following we show the limit $\lim\limits_{\beta\rightarrow \infty}\mathbb{E}_{x,y,t}\big[e^{-r(\tilde\tau^\beta_x\wedge\tilde\tau^\beta_y-t)}v(X_{\tilde\tau^\beta_x\wedge\tilde\tau^\beta_y},Y_{\tilde\tau^\beta_x\wedge\tilde\tau^\beta_y},\tilde\tau^\beta_x\wedge\tilde\tau^\beta_y)I_{\{\tau^*>\tilde\tau^\beta_x\wedge\tilde\tau^\beta_y\}}\big]$    in the above equality is 0. Since $v(x,y,t)\leq K+1$, then there exists $g(\beta)=o(\beta),\;\beta\rightarrow+\infty$ such that $v(X_{\tilde\tau^\beta_x\wedge\tilde\tau^\beta_y},Y_{\tilde\tau^\beta_x\wedge\tilde\tau^\beta_y},\tilde\tau^\beta_x\wedge\tilde\tau^\beta_y)\leq g(\beta)=o(\beta)$, together with \eqref{eq:betaX} and \eqref{eq:betaY}, we can obtain
\bee
0&\leq& \lim\limits_{\beta\rightarrow \infty}\mathbb{E}_{x,y,t}\Big[e^{-r(\tilde\tau^\beta_x\wedge\tilde\tau^\beta_y-t)}v(X_{\tilde\tau^\beta_x\wedge\tilde\tau^\beta_y},Y_{\tilde\tau^\beta_x\wedge\tilde\tau^\beta_y},\tilde\tau^\beta_x\wedge\tilde\tau^\beta_y)I_{\{\tau^*
>\tilde\tau^\beta_x\wedge\tilde\tau^\beta_y\}}\Big]\\
&\leq&\lim\limits_{\beta\rightarrow \infty}g(\beta)\mathbb{E}_{x,y,t}\Big[e^{-r(\tilde\tau^\beta_x\wedge\tilde\tau^\beta_y-t)}I_{\{\tau^*>\tilde\tau^\beta_x\wedge\tilde\tau^\beta_y\}}\Big]\\
&\leq&\lim\limits_{\beta\rightarrow \infty}\frac{g(\beta)}{\beta}\lim\limits_{\beta\rightarrow \infty}\beta\mathbb{P}\{\tilde\tau^\beta_x\wedge\tilde\tau^\beta_y<\tau^*\}\\
&\leq&\lim\limits_{\beta\rightarrow \infty}\frac{g(\beta)}{\beta}\lim\limits_{\beta\rightarrow \infty}\beta\mathbb{P}\{\tilde\tau^\beta_x\wedge\tilde\tau^\beta_y<T\}=0.
\eee
Hence $v(x,y,t)=\mathbb{E}_{x,y,t}[e^{-r(\tau^*-t)}(K-X_{\tau^*})^+]$, therefore $v(x,y,t)=V(x,y,t)$.
 \end{proof}

\section{Conclusion}\label{sec:conclusion}
In this paper, we consider an American put option of stochastic volatility with negative Fichera function on the degenerate boundary $y=0$, we impose a proper boundary condition from the definition of the option pricing to show that the solution to the associated variational inequality is unique, which is the value of the option, and the free boundary is the optimal exercise boundary of the option. Although the asset-price volatility coefficient may grow faster than linearly and the domain is unbounded, we are able to show the uniqueness by verification theorem. In this paper we only consider the payoff function $(K-x)^+$, but the method in this paper will be useful for any  nonnegative, continuous payoff function $f(x)$ which is of strictly sublinear growth, i.e., $\lim\limits_{x\rightarrow+\infty}\frac{f(x)}{x}=0$.

\bibliographystyle{plain}

\def\cprime{$'$}

\end{document}